\definecolor{DarkGray}{rgb}{0.1,0.1,0.5}
\newcommand{\appref}[1]{\hyperref[#1]{{Appendix~\ref*{#1}}}}
\newcommand{\be}{\begin{eqnarray} \begin{aligned}}
\newcommand{\ee}{\end{aligned} \end{eqnarray} }
\newcommand{\benn}{\begin{eqnarray*} \begin{aligned}}
\newcommand{\eenn}{\end{aligned} \end{eqnarray*} }
\newcommand*{\textfrac}[2]{{{#1}/{#2}}}
\newcommand*{\cB}{\mathcal{B}}
\newcommand*{\cE}{\mathcal{E}}
\newcommand*{\cF}{\mathcal{F}}
\newcommand*{\cI}{\mathcal{I}}
\newcommand*{\cM}{\mathcal{M}}
\newcommand*{\cN}{\mathscr{N}}
\newcommand*{\cO}{\mathcal{O}}
\newcommand*{\cP}{\mathcal{P}}
\newcommand*{\tr}{\mathop{\mathrm{tr}}\nolimits}
\newcommand*{\sbin}{\{0,1\}}
\newcommand{\bc}{\begin{center}}
\newcommand{\ec}{\end{center}}
\newcommand{\id}{\mathbb{I}}
\newtheorem{theorem}{Theorem}[section]
\newtheorem{lemma}[theorem]{Lemma}
\newtheorem{corollary}[theorem]{Corollary}
\newcommand{\myacknowledgments}{\begin{center}{\bf Acknowledgments}\end{center}\par}
\def\id{\mathbb{I}}
\def\01{\{0,1\}}
\newcommand{\ket}[1]{|#1\rangle}
\newcommand{\bra}[1]{\langle#1|}
\newcommand{\proj}[1]{|#1\rangle\langle#1|}
\newcounter{protoCount}
\newcounter{protoList}
\newsavebox{\tmpbox}
\newlength{\protobox}
\newenvironment{protocol}[3]{
\bigskip
\addtocounter{protoCount}{1}
\noindent \begin{lrbox}{\tmpbox}
\setlength{\protobox}{\textwidth}
\addtolength{\protobox}{-0.5cm}
\begin{minipage}[c]{\protobox}
\begin{bfseries}Protocol #1: #2\end{bfseries}
\ifthenelse{\equal{#3}{\empty}}{}{\\ #3}
\begin{list}{\begin{bfseries}\arabic{protoList}:\end{bfseries}}
{\usecounter{protoList}}
}{
\end{list}
\end{minipage}\end{lrbox}
\fbox{\usebox{\tmpbox}}
\bigskip
}
\newcommand*{\myvec}[1]{{\bf #1}}
\newcommand*{\unitaryinst}{\raisebox{-76.446pt}{\epsfig{file=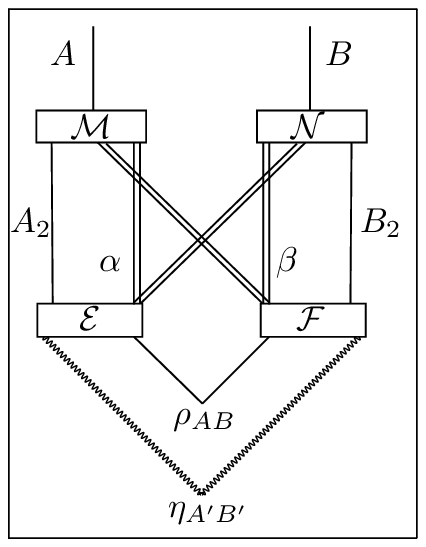,clip=}}}
\newcommand*{\measureinst}{\raisebox{-76.446pt}{\epsfig{file=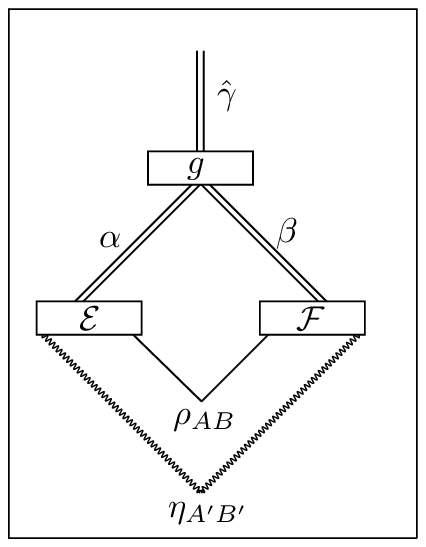,clip=}}}
\newcommand*{\measurement}{\raisebox{-76.446pt}{\epsfig{file=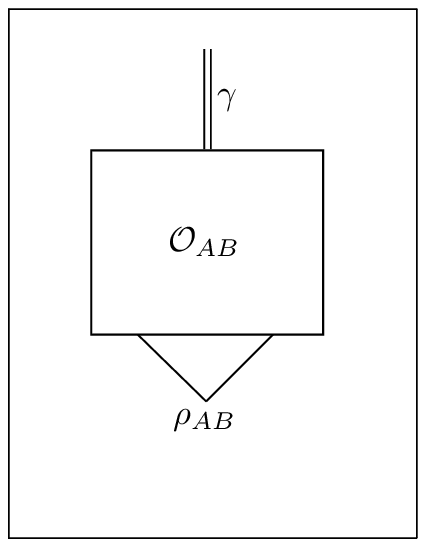,clip=}}}
\newcommand*{\unitary}{\raisebox{-76.446pt}{\epsfig{file=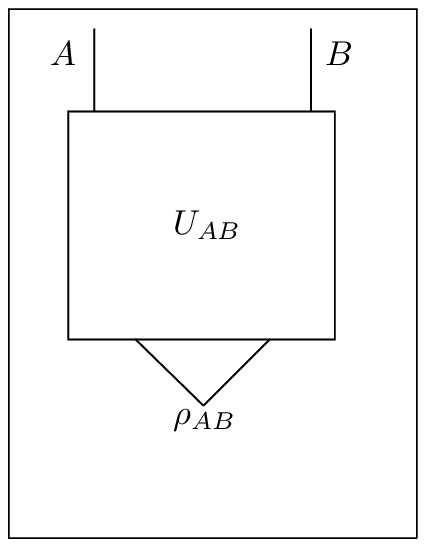,clip=}}}
\begin{document}

\title{Simplified instantaneous non-local quantum computation with applications to position-based cryptography}
\author{Salman Beigi$^{1,2}$ and Robert K\"onig$^{1,3}$}
\address{$^1$ Institute for Quantum Information, Caltech, Pasadena CA 91125, USA}
\address{$^2$ School of Mathematics,
 Institute for Research in Fundamental Sciences (IPM),
 Tehran, Iran}
\address{$^3$ IBM T.J. Watson Research Center, Yorktown Heights, NY 10598, USA}

\begin{abstract}
Instantaneous measurements of non-local observables between space-like separated regions can be performed without violating causality. This feat relies on the use of  entanglement. Here we propose novel protocols for this task and the related problem of  multipartite quantum computation with local operations and a single round of classical communication. Compared to previously known techniques, our protocols reduce the entanglement consumption by an exponential amount.  We also prove a linear lower bound on the amount of entanglement required for the implementation of a certain non-local measurement.

These results relate to position-based cryptography:  an amount of entanglement scaling exponentially in the number of communicated qubits is sufficient to render any such scheme insecure. Furthermore, we show that certain schemes are secure under the assumption that the adversary has less entanglement than a given linear bound and  is restricted to classical communication.
\end{abstract}
\maketitle

It is remarkable that the axioms of quantum mechanics are compatible with the severe restrictions imposed by relativistic causality. From the early days of quantum mechanics, this miraculous fact has repeatedly been called into question  and has been a source of great controversy. Arguably the most well-known debate of this kind has centered around the Einstein-Podolsky-Rosen (EPR) paradox~\cite{epr35}, which deals with the non-local correlations arising from a bilocal measurement of an entangled state.

  A similar discussion originated from concerns about the compatibility of the measurement process with relativistic quantum field theory. In~1931, Landau and Peierls~\cite{landaupeierls31} showed that the electromagnetic field strength cannot be accurately measured  by means of point-like test charges: an uncertainty relation implies large fluctuations in their positions, which in turn leads to the emission of radiation strongly influencing the field elsewhere. From this result,  Landau and Peierls concluded  that, quite generally, the standard measurement prescription of quantum mechanics does not apply in a relativistic setting.   

Such difficulties reconciling quantum measurements with causality are strikingly apparent when considering  non-separable bilocal measurements (e.g., a von Neumann measurement in the Bell basis of two qubits). The instant collapse of the wavefunction induced by such measurements allows to signal instantly between space-like separated regions (see e.g.,~\cite{clarketal10} for explicit examples). This may suggest that certain non-local observables are not measurable at a well-defined time in a fixed Lorentz frame. As a consequence, one may be led to believe that their expectation values do not carry any physical meaning. As with the EPR paradox, entanglement is at the root of this apparent causal restriction on the set of physically allowed observables, yet in this case, it is the observable (or measurement) instead of the state which  is entangled. Somewhat ironically, entanglement also figures prominently in the resolution of this issue.

Landau and Peierls' far-reaching conclusions were soon challenged by Bohr and Rosenfeld~\cite{bohrrosenfeld33}, who showed how the electromagnetic field strength can be measured using spatially extended charge distributions instead of point-like test charges.  Much later, Aharonov and Albert~\cite{aharonovalbert80,aharonovalbert81} proposed a way of measuring certain non-local variables in a manner consistent with causality. This involves the use of prior shared entanglement. A series of subsequent works~\cite{ahaa84,ahab84,ahaavaid86,popescuvaid94,groismanvaidman01} characterizing and extending this kind of instantaneous measurements culminated in a scheme by Vaidman~\cite{vaidman03}, which allows to instantaneously measure any non-local observable. This disproves the stated existence of causality restrictions on non-local measurements.

\begin{figure}
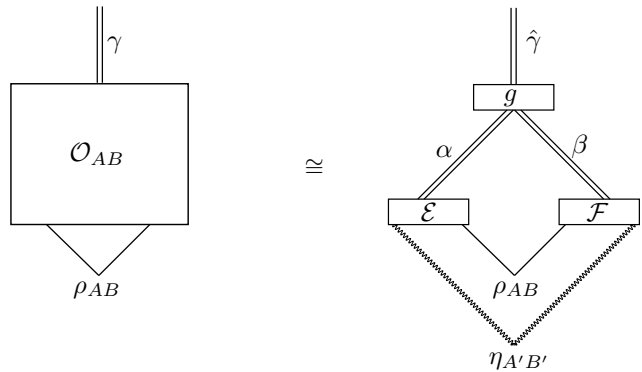

\begin{center}
\measurement\qquad $\cong$\qquad \measureinst
\end{center}
\caption{Instantaneous measurement of a non-local observable~$\cO_{AB}$:
Alice and Bob share, in addition to the state~$\rho_{AB}$ to be measured,  an auxiliary entangled state~$\eta_{A'B'}$ (indicated by the wiggly line).
They perform local  measurements~$\cE=\{E^\alpha_{AA'}\}_\alpha$ and $\cF=\{F^\beta_{BB'}\}_\beta$, respectively. Charlie computes a function $\hat{\gamma}=g(\alpha,\beta)$ of their measurement results. The measurements and the post-processing function are chosen in such a way that this simulates the measurement of~$\rho_{AB}$ with the non-local POVM~$\cO_{AB}=\{O^\gamma_{AB}\}_\gamma$. \label{fig:instmeasurement}}
\end{figure}

The basic achievement of such measurement schemes is illustrated in Fig.~\ref{fig:instmeasurement}. Two space-like separated observers Alice (A) and Bob (B) sharing a bipartite system~$AB$ aim to determine a certain non-local property of their joint state $\rho_{AB}=\rho_{AB}(t_0)$ at a specific time~$t_0$. This property is described by a non-local  POVM  with operators~$\cO=\{O^\gamma_{AB}\}_{\gamma}$, and their goal is to sample from the probability distribution 
\begin{equation} 
p(\gamma)=\tr(O^\gamma_{AB}\rho_{AB})\ .\label{eq:distributionnonlocal}
\end{equation}
 The apparent causality problem arises when  the operators constituting the measurement are non-separable.   Vaidman shows that this sampling problem can be solved as follows (see Fig.~\ref{fig:instmeasurement}):
\begin{enumerate}
\item
 First, Alice and Bob apply local measurements at time~$t_0$ to their shared state $\rho_{AB}\otimes \eta_{A'B'}$. Here $\eta_{A'B'}$ is an auxiliary shared entangled state. They obtain outcomes $\alpha$ and $\beta$, respectively.

\item
Alice sends $\alpha$ and Bob sends $\beta$ to Charlie (a point~$C$ in the intersection of the causal cones of $A$ and $B$).
\item
 At a later time~$t>t_0$ after reception of  the measurement outcomes~$(\alpha,\beta)$, Charlie computes a value~$\hat{\gamma}$ by applying some function~$g$ to the pair $(\alpha,\beta)$.
\end{enumerate}
 In other words, all measurements are local and instantaneous and performed at time~$t_0$. The output is a value~$\hat{\gamma}$ which is  computed at a time~$t>t_0$ at~$C$, but is supposed to pertain to a non-local measurement of~$\rho_{AB}(t_0)$ at time~$t_0$.

For any POVM $\{O^\gamma_{AB}\}_\gamma$, Vaidman constructs local measurements $\cE=\{E^\alpha_{AA'}\}_\alpha$  and $\cF=\{F^\beta_{BB'}\}_\beta$ and a postprocessing function $g$ such that the distribution
\begin{equation*}
\hat{p}(\hat{\gamma}) =\sum_{(\alpha,\beta): g(\alpha,\beta)=\hat{\gamma}}\tr\left((E_{AA'}^\alpha\otimes F_{BB'}^\beta)(\rho_{AB}\otimes\eta_{A'B'})\right)
\end{equation*}
is close to the distribution~\eqref{eq:distributionnonlocal}. Entanglement therefore allows to estimate expectation values of non-local observables at a specific instant in time without violating causality.

Beyond realizing the statistics of non-local measurements, Vaidman's scheme 
also provides an instantaneous implementation of non-local operations. Here the goal is to apply a (non-local) unitary $U_{AB}$ to the joint state~$\rho_{AB}$ of Alice and Bob. They are restricted to applying local operations and only a single round of simultaneously passed classical communication, see Fig.~\ref{fig:simultaneouscomputation}. This very limited form of interaction acts as a non-signaling constraint and  makes this task non-trivial for general unitaries. Again, Vaidman's techniques  demonstrate that prior shared entanglement allows to implement such non-local operations. This fact was first recognized in~\cite{buhrmanetal10}, where it was used to address a problem in cryptography.
\begin{figure}
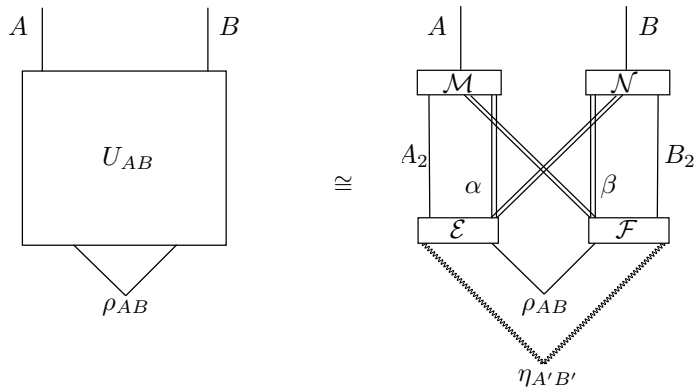
 
\begin{center}
\unitary\qquad $\cong$\qquad \unitaryinst
\end{center}
\caption{Instantaneous implementation of a non-local unitary~$U_{AB}$ on a bipartite state~$\rho_{AB}$ using the shared entangled state~$\eta_{A'B'}$.
Alice and Bob perform local (partial) measurements~$\cE=\{E^\alpha_{A_1}\}_\alpha$ and $\cF=\{F^\beta_{B_1}\}_\beta$ where $A_1A_2=AA'$ and $B_1B_2=BB'$, respectively. This results in the  residual state~$\rho^{\alpha,\beta}_{A_2B_2}:=\tr_{A_1B_1} \left( \left(\id_{A_2\otimes B_2}\otimes E^\alpha_{A_1}\otimes F^\beta_{B_1} \right)(\rho_{AB}\otimes \eta_{A'B'}) \right) /p(\alpha,\beta)$ with probability~$p(\alpha,\beta) =\tr \left(\left(\id_{A_2\otimes B_2}\otimes E^\alpha_{A_1}\otimes F^\beta_{B_1}\right)(\rho_{AB}\otimes \eta_{A'B'}) \right)$. According to the (communicated) measurement results~$(\alpha,\beta)$, Alice and Bob apply local post-processing operations~$\cM^{\alpha,\beta}$ and~$\cN^{\alpha,\beta}$, respectively. The measurements and postprocessing operations are chosen  such that the resulting average state~$\hat{\rho}_{AB}=\sum_{\alpha,\beta} p(\alpha,\beta) \left(\cM^{\alpha,\beta}\otimes\cN^{\alpha,\beta}\right)(\rho^{\alpha,\beta}_{A_2B_2})$
is close to the target state $U_{AB}\rho_{AB}U_{AB}^\dagger$. 
\label{fig:simultaneouscomputation}}
\end{figure}

Explicitly, a general protocol in this model  proceeds as follows (see Fig.~\ref{fig:simultaneouscomputation}):
\begin{enumerate}
\item
Alice and Bob simultaneously apply (partial) local measurements $\{E^\alpha_{A_1}\}_\alpha$ and  $\{F^\beta_{B_1}\}_\beta$ to the joint state $\rho_{AB}\otimes\eta_{A'B'}$, where $\eta_{A'B'}$ is the shared entanglement. Here we have partitioned Alice's complete system $AA'$ into subsystems~$A_1$ and $A_2$, and similarly for Bob.
\item
 Alice and Bob then simultaneously communicate $\alpha$ and $\beta$ to each other.
\item
Subsequently, Alice applies a local postprocessing CPTP map $\cM^{\alpha,\beta}:\cB(A_2)\rightarrow \cB(A)$ chosen according to the measurement outcomes $(\alpha,\beta)$. Bob similarly applies a  postprocessing CPTP map~$\cN^{\alpha,\beta}:\cB(A_2)\rightarrow\cB(B)$. 
\end{enumerate}
Vaidman's scheme gives, for every unitary~$U_{AB}$, measurements and postprocessing operations such that the final state after these operations is close to~$U_{AB}\rho_{AB}U_{AB}^\dagger$.

Our focus here is on the amount of shared entanglement required for the implementation of such primitives.
 Assuming that~$A$ and~$B$ consist of $n$ qubits each, we give procedures which solve these tasks to arbitrary constant precision while consuming~$O(n2^{8n})$~ebits of entanglement. In contrast, earlier schemes based on Vaidman's ideas require an amount of entanglement scaling doubly exponentially with~$n$.  Our protocols are considerably simpler because they are based on  a modified version of teleportation proposed by  Ishizaka and Hiroshima~\cite{ishietal08,ishietal09}.

It is worth emphasizing  that our procedures only use the POVM~$\{O_{AB}^{\gamma}\}$ or the unitary~$U_{AB}$ as a black box (and do not depend on their particular form), and moreover are universal in the sense that the number of required ebits is independent of the particular measurement or unitary that is implemented. For specific instances, much more  efficient schemes are known: in~\cite{clarketal10},  it was shown that it is possible to realize a non-local unitary~$U$ by consuming an amount of entanglement exponential in~$M$, where~$M$ is the length of a factorization of $U=R_1\cdots R_M$ into Pauli rotations~$\{R_j\}_j$. Note, however, that for a generic $n$-qubit unitary,  $M$~is exponentially large in~$n$, and then our scheme again provides an exponential saving for a typical unitary.

 On the negative side, by constructing a specific example, we show that a linear number of ebits is generally insufficient to perform an instantaneous distributed measurement. This impossibility result improves upon~\cite{buhrmanetal10}, where it is shown that a certain such task cannot be performed without shared entanglement. 

Our results significantly tighten previously known upper and lower bounds on the amount of entanglement required for instantaneous non-local measurement and computation, but still leave an exponential gap. We conjecture that the entanglement scaling of our protocols is essentially optimal 
among protocols which only make black-box use of the unitary or measurement, as explained above.

These results have direct application to position-based quantum cryptography, which attempts to exploit the location of an entity as its only credential. Our techniques show that any such scheme is insecure in the presence of malicious players sharing an  amount of entanglement exponential
in the number of communicated qubits. This was previously known only when the amount of entanglement is doubly exponential~\cite{buhrmanetal10}.

On the other hand, we prove the security of certain protocols assuming that the adversarial players have less entanglement than a given linear bound and are restricted to classical communication. We show that  under these assumptions, certain protocols have exponential soundness, i.e., the adversarial players have a negligible probability of cheating successfully. In contrast, previously known security proofs achieving exponential soundness~\cite{buhrmanetal10} do not allow any prior entanglement (while also requiring the restriction to classical communication, see~discussion after equation~\eqref{eq:sequentialcompositionL} below). 

We present two types of protocols with exponential soundness assuming limited entanglement and classical communication: the first one relies on the impossibility of implementing a certain high-dimensional instantaneous measurement, whereas the second one is obtained by parallel composition and reduction to the case of adversaries without prior entanglement. The latter scenario was previously analyzed in~\cite{buhrmanetal10}. It gives rise to a protocol which can be realized using single-qubit manipulations only.

The restriction to (unlimited, but) classical instead of quantum  information  is  natural in the study of resource requirements for instantaneous measurement and computation. Indeed, allowing (unlimited) quantum communication would render instantaneous measurements trivial. Also, given an implementation of an instantaneous quantum computation, quantum communication can easily be traded against a corresponding increase in prior entanglement (using teleportation). However, in position-based quantum cryptography, the restriction to classical communication is less natural: ideally, security  should be established even in the case where  the adversaries are allowed to communicate an arbitrary amount of quantum information.  Such strong results are currently only known with constant soundness and a zero or constant amount of prior entanglement. Establishing similar results for e.g., a linear amount of entanglement and exponential soundness is a major open problem in this context. The following table summarizes this state of affairs:

\begin{center}
\begin{tabular}{c|c|c|c}
soundness & allowed communication & pre-shared ebits & reference\\
\hline
constant & quantum & zero/constant & \cite{buhrmanetal10,laulo10}\\
\hline
exponential & classical & zero & \cite{buhrmanetal10}\\
\hline
exponential & classical & linear & Lemma~\ref{lem:pinsecurityprotocol}/Lemma~\ref{lem:pionecomposition}\\
\hline
exponential & quantum & e.g., linear & open problem
\end{tabular}
\end{center}

\section{Review of Vaidman's scheme: teleportation without communication\label{sec:vaidman}}
We briefly discuss  the implementation of a bipartite unitary~$U_{AB}$ using Vaidman's scheme~\cite{vaidman03}. While this is not directly needed for the discussion of our scheme, it helps to clarify the nature of our simplifications. In particular,  in Section~\ref{sec:vaidmanentanglement},  we will discuss how the  doubly exponential scaling of the entanglement consumption arises in Vaidman's procedure.

Vaidman's scheme assumes that Alice and Bob share a large supply of EPR pairs $\ket{\Phi}_{A'B'}=\frac{1}{\sqrt{2}}(\ket{0}_{A'}\ket{0}_{B'}+\ket{1}_{A'}\ket{1}_{B'})$. This entanglement can in principle be used for teleportation~\cite{Bennett93}: to teleport a qubit state $\ket{\Psi}_A$ to Bob, Alice measures~$A$ and her part~$A'$ of an EPR pair in the Bell basis. She obtains each outcome $k\in\{0,1,2,3\}$ with equal probability~$1/4$. We will refer to such a measurement as a {\em teleportation measurement}. Conditioned on her outcome being~$k$, Bob's register $B'$ contains the state $\sigma_{k}\ket{\Psi}_A$, where $\{\sigma_i\}_{i=0}^3$ are the Pauli operators. Standard teleportation then proceeds by Alice sending~$k$ to Bob, and Bob applying the correction operation $\sigma_k$. 

Clearly, $n$~shared EPR pairs can be used to teleport an arbitrary $n$-qubit state $\ket{\Psi}$: Alice's teleportation measurement is a tensor product measurement between each qubit and one half of an EPR pair. It gives each outcome $\myvec{k}\in\{0,1,2,3\}^n$ with probability~$4^{-n}$. Bob's state in his part of the EPR pairs then ends up being $\sigma_{\myvec{k}}\ket{\Psi}_A$, where $\sigma_{\myvec{k}}=\sigma_{k_1}\otimes\cdots\otimes\sigma_{k_n}$. Sending $\myvec{k}\in\{0,1,2,3\}^n$ to Bob allows him to apply the Pauli correction~$\sigma_{\myvec{k}}$.

In a setting with free classical communication, applying a unitary to their joint state  would easily be achievable using teleportation: Bob  teleports~$n$ qubits (system $B$) to Alice using $n$~ebits of entanglement. Alice applies $U_{AB}$ and teleports the system~$B$ back to Bob. However, teleportation cannot directly be used for instantaneous non-local computation because outcomes of teleportation measurements cannot be communicated. To get around this problem, Vaidman uses an elaborate recursive technique which we explain in Section~\ref{sec:vaidmanrecursive}. 

\subsection{Reduction to a state held by one of the parties}
It is instructive to see that, while teleportation is not directly applicable in the setting of instantaneous measurement, it can nevertheless be used advantageously.  In the following we show that in order to implement a bipartite unitary~$U=U_{AB}$ applied to a state $\ket{\Psi}=\ket{\Psi}_{AB}\in(\mathbb{C}^2)^{\otimes n}\otimes (\mathbb{C}^2)^{\otimes n} $ it suffices to provide a protocol $\cP'$ with the following properties: it proceeds by application of local operations only (no communication is allowed) and  after completion of the protocol, Bob holds the state~$\sigma_{\myvec{s}}U\ket{\Psi}$ in one of his registers, and Alice and Bob have classical information $\alpha$ and $\beta$ (measurement outcomes) that together determine~$\myvec{s}=\myvec{s}(\alpha,\beta)\in\{0,1,2,3\}^{2n}$. If we design such a $\cP'$, the procedure of implementing the unitary $U$ on $\ket{\Psi}$ is immediate:

\begin{description}
\item[1.a] Alice and Bob run $\cP'$, getting classical outcomes $(\alpha,\beta)$ and the state $\sigma_{\myvec{s}}U\ket{\Psi}$ in Bob's registers $B'_1B'_2$, where $\myvec{s}=\myvec(\alpha,\beta)$. 
\item[1.b] Bob performs a teleportation measurement on $B'_1$ and $n$~EPR pairs shared between Alice and Bob in registers $A'':B''$. He obtains the outcome $\myvec{v}\in\{0,1,2,3\}^n$. 
\item[2] Alice sends $\alpha$ to Bob. Bob sends $\beta$ and $\myvec{v}$  to Alice.
\item[3] Alice and Bob both compute $\myvec{s}=\myvec{s}(\alpha,\beta)$. Let $\myvec{s}=(\myvec{s}_A,\myvec{s}_B)\in\{0,1,2,3\}^n\times\{0,1,2,3\}^n$. Alice applies $\sigma_{\myvec{s}_A}\sigma_{\myvec{v}}$ to $A''$, while Bob applies $\sigma_{\myvec{s}_B}$ to~$B'_2$.
\end{description}
At the end of this protocol Alice and Bob share $U\ket{\Psi}$ in $A'':B_2'$.

\subsection{Vaidman's recursive scheme\label{sec:vaidmanrecursive}}
We now show how Vaidman realizes a protocol $\cP'$ as described in the previous section. Even though the final measurements are done instantaneously and simultaneously, it is useful for the construction to think of an interactive procedure. Since no communication is allowed in $\cP'$ and Alice and Bob's measurements commute, this interactive protocol is indeed instantaneous.

 In the first round, Bob performs a teleportation measurement on~$B$ and his part of $n$~ebits in registers~$A'_1:B'_1$. Conditioned on the outcomes being~$\myvec{t}_1\in\{0,1,2,3\}^n$, Alice now has the state $(\id_A\otimes \sigma_{\myvec{t}_1})\ket{\Psi}$ in $AA'_1$. Since Alice is ignorant of~$\myvec{t}_1$, she cannot apply the corresponding correction operation. Instead, she simply applies $U^{(1)}=U$ to her state, and then performs a teleportation measurement between the resulting state and $2n$~ebits in registers~$A_1'':B_1''$ shared between Alice and Bob. Denoting by $\myvec{s}_1\in\{0,1,2,3\}^{2n}$ be the outcome of Alice's measurement, Bob holds the state~$\sigma_{\myvec{s}_1}U(\id_A\otimes \sigma_{\myvec{t}_1})\ket{\Psi}$ in his register~$B_1''$, which can be written as
\begin{equation}
U^{(2)}_{\myvec{s}_1}(\myvec{t}_1)^\dagger U\ket{\Psi}\qquad\textrm{ where }\qquad U^{(2)}_{\myvec{s}_1}(\myvec{t}_1)=U(\id_A\otimes\sigma_{\myvec{t}_1})U^\dagger \sigma_{\myvec{s}_1}\ .\label{eq:firstround}
\end{equation}
Clearly, if $\myvec{t}_1=0^n$, then $U^{(2)}_{\myvec{s}_1}(\myvec{t}_1)=\sigma_{\myvec{s}_1}$. This means that Bob has a state of the desired form and can stop. In the actual protocol, this means that Bob does not perform further measurements.  However, $\myvec{t}_1=0^n$ happens only with probability~$4^{-n}$. 

Vaidman's crucial insight was that it is possible to recursively apply this procedure, essentially attempting to implement $U^{(2)}_{\myvec{s}_1}(\myvec{t}_1)$ in the next round. However, this is not entirely straightforward since Bob's measurement outcome~$\myvec{t}_1$ (unlike $\myvec{s}_1$) is unknown to Alice. To get around this, Alice and Bob use, for every possible outcome~$\myvec{\hat{t}}_1$ of Bob's measurement result,   a separate set of~$2n$~ebits in registers~$A'_{1,\myvec{\hat{t}}_1}:B'_{1,\myvec{\hat{t}}_1}$ for the Bob's teleportation measurements, and $2n$~ebits in registers~$A''_{1,\myvec{\hat{t}}_1}:B''_{1,\myvec{\hat{t}}_1}$ for Alice's measurements.  In essence, this allows Alice to implement operations which effectively depend on the outcomes~$\myvec{t}_1$ of the previous round. She just applies, for each~$\myvec{\hat{t}}_1$, a suitably chosen operation to registers $A'_{1,\myvec{\hat{t}}_1}$ and $A''_{1,\myvec{\hat{t}}_1}$ which may depend on $\myvec{\hat{t}}_1$. 
Because Bob holds~$\myvec{t}_1$, he knows which 
pair~$(B'_{1,\myvec{\hat{t}}_1},B''_{1,\myvec{\hat{t}}_1})$ of registers is the relevant one containing the desired state.

Explicitly, in the second round, Bob performs a teleportation measurement on $B''_1$ and his part of the $2n$~ebits~$A'_{1,\myvec{t}_1}:B'_{1,\myvec{t}_1}$. Let $\myvec{t}_2\in\{0,1,2,3\}^{2n}$ be the corresponding outcomes. He does not use the other registers.  Then Alice, for each $\myvec{\hat{t}}_1\in\{0,1,2,3\}^{n}$, applies $U^{(2)}_{\myvec{s}_1}(\myvec{\hat{t}}_1)$ to register~$A'_{1,\myvec{\hat{t}}_1}$ and performs a teleportation measurement between $A'_{1,\myvec{\hat{t}}_1}$ and~$A''_{1,\myvec{\hat{t}}_1}$ getting outcome $\myvec{s}_2(\myvec{\hat{t}}_1)$. At the end of these operations, Bob, in register~$B_{1,\myvec{t}_1}''$, holds
\begin{equation}
\hspace{-6ex}U^{(3)}_{\myvec{s}_1,\myvec{s}_2}(\myvec{t}_1,\myvec{t}_2)^\dagger U\ket{\Psi}\quad\textrm{ where }\quad
U^{(3)}_{\myvec{s}_1,\myvec{s}_2}(\myvec{t}_1,\myvec{t}_2)=U^{(2)}_{\myvec{s}_1}(\myvec{t}_1)\sigma_{\myvec{t}_2}U^{(2)}_{\myvec{s}_1}(\myvec{t}_1)^\dagger \sigma_{\myvec{s}_2}\ \label{eq:firstsecondround}
\end{equation}
where $\myvec{s}_2=\myvec{s}_2(\myvec{t}_1)$, and 
he neglects other registers $B''_{1,\myvec{\hat{t}}_1}$, $\myvec{\hat{t}}_1\neq\myvec{t}_1$. Similarly as before, if $\myvec{t}_2=0^{2n}$, which happens with probability~$4^{-2n}$, Bob has $\sigma_{\myvec{s}_2}U\ket{\Psi}$ and has reached the goal.  Note that the total number of ebits used in the second round is $4n\cdot 4^n$. 

It is clear how to continue this recursion: in the $R$-th round, Alice and Bob use, for every possible sequence $(\myvec{\hat{t}}_1,\ldots,\myvec{\hat{t}}_{R-1})$ of Bob's outcomes in the previous rounds, $2n$~ebits $A'_{1,\myvec{\hat{t}}_1,\ldots,\myvec{\hat{t}}_{R-1}}:B'_{1,\myvec{\hat{t}}_1,\ldots,\myvec{\hat{t}}_{R-1}}$ and $2n$~ebits $A''_{1,\myvec{\hat{t}}_1,\ldots,\myvec{\hat{t}}_{R-1}}:B''_{1,\myvec{\hat{t}}_1,\ldots,\myvec{\hat{t}}_{R-1}}$.  Bob performs a teleportation measurement between $B''_{1,\myvec{t}_1,\ldots,\myvec{t}_{R-2}}$ and his part of
the $2n$~ebits $A'_{1,\myvec{t}_1,\ldots,\myvec{t}_{R-1}}:B'_{1,\myvec{t}_1,\ldots,\myvec{t}_{R-1}}$. For each sequence~$(\myvec{\hat{t}}_1,\ldots,\myvec{\hat{t}}_{R-1})$, Alice applies the unitary $U^{(R)}_{\myvec{\hat{s}}_1,\myvec{\hat{s}}_2,\ldots,\myvec{\hat{s}}_{R-1}}(\myvec{\hat{t}}_1,\ldots,\myvec{\hat{t}}_{R-1})$ to $A'_{1,\myvec{\hat{t}}_1,\ldots,\myvec{\hat{t}}_{R-1}}$ where $\myvec{\hat{s}}_j=\myvec{\hat{s}}_j(
\myvec{\hat{t}}_1,\ldots,\myvec{\hat{t}}_{j-1})$. She also performs a teleportation measurement between $A'_{1,\myvec{\hat{t}}_1,\ldots,\myvec{\hat{t}}_{R-1}}$ and her part of the $2n$~ebits $A''_{1,\myvec{\hat{t}}_1,\ldots,\myvec{\hat{t}}_{R-1}}:B''_{1,\myvec{\hat{t}}_1,\ldots,\myvec{\hat{t}}_{R-1}}$. Bob, in register~$B_{1,\myvec{t}_1,\ldots,\myvec{t}_{R-1}}''$
ends up with 
\begin{eqnarray}
\hspace{-6ex}U^{(R+1)}_{\myvec{s}_1,\ldots,\myvec{s}_R}(\myvec{t}_1,\ldots,\myvec{t}_R)^\dagger U\ket{\Psi}\quad\textrm{ where }\nonumber\\
\hspace{-14ex} U^{(R+1)}_{\myvec{s}_1,\ldots,\myvec{s}_R}(\myvec{t}_1,\ldots,\myvec{t}_R)=U^{(R)}_{\myvec{s}_1,\ldots,\myvec{s}_{R-1}}(\myvec{t}_1,\ldots,\myvec{t}_{R-1})\sigma_{\myvec{t}_R}U^{(R)}_{\myvec{s}_1,\ldots,\myvec{s}_{R-1}}(\myvec{t}_1,\ldots,\myvec{t}_{R-1})^\dagger \sigma_{\myvec{s}_{R}}\ .\label{eq:Rthround}
\end{eqnarray} 
Each round leads to a trivial correction operation $\myvec{t}_R=0^{2n}$ with probability~$4^{-2n}$. As soon as Bob obtains this outcome in some round~$R$, he stops performing any further measurements. Alice, on the other hand, continues applying her operations until she has operated on all the entangled states available to the two parties.

The classical information $\alpha$ sent by Alice consists of the sequence of all her measurement results $\{\myvec{s}_j(\myvec{t})\}_{j,\myvec{t}}$. Bob's message $\beta$ consist of the number of rounds~$R$ and all his measurement results $(\myvec{t}_1,\ldots,\myvec{t}_R)$. Bob holds $\sigma_{\myvec{s}_R}U\ket{\Psi}$  in $B''_{1,\myvec{t}_1,\ldots,\myvec{t}_{R-1}}$, where $\myvec{s}_R$ is determined by $(\alpha,\beta)$, as required.

\subsection{Entanglement consumption in Vaidman's scheme\label{sec:vaidmanentanglement}}
The amount of entanglement required in~$R$ rounds is roughly proportional to the number of sequences $(\myvec{\hat{t}}_1,\ldots,\myvec{\hat{t}}_{R-1})$ and therefore exponential in~$R$.  Since the probability of success in each round (except the first round which is $4^{-n}$) equals $4^{-2n}$, to reach the success probability $1-\varepsilon$, $R$ needs to be roughly $\log(1/\varepsilon)\cdot2^{4n}$. As a result, the amount of required entanglement for a constant $\varepsilon$ is doubly exponential in~$n$.

Clearly, the reason for this unfavorable behavior is the recursive structure of the protocol, which is a consequence of the non-interactive way the teleportation correction operations are dealt with. Our simplified schemes avoid these problems altogether by making use of a different kind of teleportation scheme whose correction operations are in some sense trivial. In particular, the resulting scheme is non-recursive.

\section{Port-based teleportation}
In this section, we review a form of teleportation introduced by Ishizaka and Hiroshima~\cite{ishietal08,ishietal09}.  To distinguish this kind of teleportation from the better known usual scheme, we borrow from their terminology and call this {\em port-based teleportation}.

\subsection{Teleportation without correction}
 
The goal of port-based teleportation is to achieve teleportation with simpler correction operations on Bob's side. Instead of being able to apply arbitrary Pauli operations, we assume that Bob can only perform the arguably simplest imaginable CPTP map depending on classical information. That is, Bob can discard any subsystem of his choosing according to the classical information received from Alice.

Remarkably, teleportation is still possible in this restricted setting using a more intricate measurement on Alice's side.  Concretely, Alice wants to teleport a qudit state $\ket{\Psi}_A$ from her system~$A\cong\mathbb{C}^d$ to Bob's system~$B\cong\mathbb{C}^d$. We assume that Alice and Bob share~$N$  copies of the maximally entangled state $\ket{\Phi}=\frac{1}{\sqrt{d}}\sum_{i=1}^d \ket{i}\ket{i}$ in registers $A'_1:B'_1,A'_2:B'_2,\ldots,A'_N:B'_N$. We fix an orthonormal standard basis in each of these spaces. The protocol proposed in~\cite{ishietal08} then proceeds as follows:
\begin{enumerate}
\item
Alice performs a certain POVM $\{E^i_{AA'^N}\}_{i=1}^N$ on her systems, where $A'^N=A'_1\cdots A'_N$. She sends the result~$i$ to Bob.
\item
Bob discards everything except the subsystem $B'_i$ and calls it $B$. This register is supposed to hold the state $\ket{\Psi}$.   
\end{enumerate}
Let $\cE_{\ket{\Phi}^{\otimes N}}:\cB(A)\rightarrow\cB(B)$ be the CPTP map described by this protocol taking the original state $\ket{\Psi}_A$ to Bob's system~$B_i'\cong B$ by using auxiliary entanglement $\ket{\Phi}^{\otimes N}$. (Here~$\cB(A)$ denotes the set of bounded linear operators on a Hilbert space $A$.) Explicitly, this map  is given by
\begin{equation}
\hspace{-6ex}\cE_{\ket{\Phi}^{\otimes N}}(\rho_A)=\sum_{i=1}^N\tr_{B'^N\backslash B'_i} \tr_{AA'^N}\left((\id_{B'^N}\otimes E^{i}_{AA'^N})(\rho_A\otimes\proj{\Phi}^{\otimes N}_{A'^NB'^N})\right)\ .\label{eq:multioutputchannel}
\end{equation}
In this expression,~$\tr_{B'^N\backslash B'_i}:\cB(B'^N)\rightarrow\cB(B)$ denotes the CPTP map consisting of tracing out all systems except~$B'_i\cong B$. The main result of~\cite{ishietal08,ishietal09} is that there exists a choice of a POVM~$\{E^i_{AA'^N}\}_i$ such that 
this CPTP map is close to the identity channel for large~$N$. 

The measure of distance used in~\cite{ishietal08,ishietal09} to compare a channel $\cE:\cB(A)\rightarrow\cB(B)$  to the identity channel $\cI_A:\cB(A)\rightarrow \cB(A) \cong\cB(B)$ is the  entanglement fidelity 
\begin{equation}
F(\cE) = \tr \proj{\Phi}_{BC}(\cE\otimes\cI_C)(\proj{\Phi}_{AC})\ ,\label{eq:entanglementfidelity}
\end{equation}
where $\ket{\Phi}$ is the maximally entangled state. Because of the relation~\cite{horodeckietal99}
\begin{equation*}
(F(\cE)d+1)/(d+1)=\int \bra{\Psi}\cE(\proj{\Psi})\ket{\Psi}\ d\Psi\ ,
\end{equation*} between entanglement fidelity $F(\cE)$ and the output fidelity averaged over the Haar measure on all pure input states, a lower bound on~\eqref{eq:entanglementfidelity}  expresses how well~$\cE$ preserves  quantum information for random inputs on average. The following is shown in~\cite{ishietal08,ishietal09}. 
\begin{theorem}[\cite{ishietal08,ishietal09}]\label{thm:portbasedmain}
Let $\ket{\Phi}=\frac{1}{\sqrt{d}}\sum_{i=1}^d \ket{i}\ket{i}$ be the maximally entangled state.  There is a POVM
$\{E^i_{AA'^N}\}_{i=1}^N$ on $\mathbb{C}^d\otimes(\mathbb{C}^d)^{\otimes N}$ such that the CPTP map~$\cE_{\ket{\Phi}^{\otimes N}}$ defined by~\eqref{eq:multioutputchannel} satisfies
\begin{equation*}
F(\cE_{\ket{\Phi}^{\otimes N}})\geq 1-\frac{d^2-1}{N}\ .
\end{equation*}
\end{theorem}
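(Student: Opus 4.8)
The plan is to construct the POVM $\{E^i_{AA'^N}\}_i$ explicitly via a ``pretty good measurement'' associated with the signal states that arise in the teleportation scheme, and then to lower-bound its entanglement fidelity by a symmetry/representation-theoretic argument. First I would identify the relevant signal states. If Alice holds $\ket{\Psi}_A$ and shares $N$ copies of $\ket{\Phi}$, then teleporting successfully to port $i$ corresponds to the situation in which port $i$ carries $\ket{\Psi}$ while the remaining $N-1$ ports, together with Alice's $A$ register, are in the maximally entangled configuration. Concretely, for the entanglement-fidelity calculation one appends a reference system $C$ and starts from $\proj{\Phi}_{AC}\otimes \proj{\Phi}^{\otimes N}_{A'^NB'^N}$; the states Alice must discriminate on $AA'^N$ are (up to normalization) the $N$ reduced states $\rho^{(i)}$ obtained by projecting $B'_i C$ onto $\ket{\Phi}$. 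By the symmetry of the maximally entangled state these $\rho^{(i)}$ are permutations of a single operator, and the natural choice is the square-root (pretty good) measurement $E^i = \Lambda^{-1/2}\rho^{(i)}\Lambda^{-1/2}$ with $\Lambda = \sum_j \rho^{(j)}$; one checks $\Lambda$ is supported on the symmetric-type subspace so the inverse is taken on the support and a completion term is added.

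Next I would write the entanglement fidelity in closed form. Plugging the PGM into~\eqref{eq:entanglementfidelity} and~\eqref{eq:multioutputchannel}, and using that the map outputs port $i$ exactly when outcome $i$ occurs, gives $F(\cE_{\ket{\Phi}^{\otimes N}}) = \sum_i \tr\!\big(\sqrt{\rho^{(i)}}\,\Lambda^{-1/2}\,\rho^{(i)}\,\Lambda^{-1/2}\sqrt{\rho^{(i)}}\big)$ or a closely related expression — the point is that it reduces to a trace involving only $\rho^{(1)}$ and $\Lambda$ because of permutation invariance, namely $F = N\,\tr\!\big(\rho^{(1)}\Lambda^{-1/2}\rho^{(1)}\Lambda^{-1/2}\big)$ up to normalization factors. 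The operator $\Lambda = \sum_i \rho^{(i)}$ is permutation-symmetric in the $N$ ports and commutes with the action of $U^{\otimes(N+1)}$ on $A A'^N$ (with a twirl absorbing the $C$-dependence), so by Schur--Weyl duality $\Lambda$ is block-diagonal with blocks labelled by Young diagrams with at most $d$ rows, and on each block it is a known multiple of the identity determined by dimensions of irreducible $\mathrm{GL}_d$- and $S_N$-representations. This is where the bulk of the computation lives.

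The final step is the estimate $F \ge 1 - (d^2-1)/N$. Having diagonalized $\Lambda$ in the Schur basis, the trace $\tr(\rho^{(1)}\Lambda^{-1/2}\rho^{(1)}\Lambda^{-1/2})$ becomes a sum over Young diagrams $\lambda \vdash N$ (or $\vdash N-1$, depending on how one sets up the first port) of ratios $m_\lambda d_\lambda / (\text{something})$, and one uses the branching/hook-length identities for the representations involved to collapse this to $\tfrac1N\sum_\lambda (\text{dimension ratio})$, which telescopes or bounds cleanly. I would then invoke a combinatorial inequality — essentially that the ``lost'' weight is controlled by the number of Young diagrams with a fixed number of boxes removed, which contributes at most the $d^2-1$ factor — to obtain $F \ge 1 - (d^2-1)/N$. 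The main obstacle I expect is the representation-theoretic bookkeeping: correctly tracking the twirled states as operators on the symmetric subspace, getting the eigenvalues of $\Lambda$ in terms of $S_N$- and $\mathrm{U}(d)$-dimensions, and then pushing the resulting dimension-ratio sum through a clean inequality rather than an unwieldy one. A secondary subtlety is handling the support of $\Lambda$ (it is not full rank), so the PGM must be defined on the support with an arbitrary completion, and one must verify this completion does not spoil the fidelity bound.
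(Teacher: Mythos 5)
Your first step is exactly right and matches the paper: the entanglement fidelity reduces to the average success probability of discriminating the $N$ states $\eta^i=\tr_{B'^N\backslash B'_i}\proj{\Phi}^{\otimes N}$ (equivalently $\proj{\Phi}_{A'_iB}\otimes(\id/d)^{\otimes(N-1)}$) with the pretty good measurement, via $F(\cE)=\frac{N}{d^2}p_{\text{succ}}$. But your route to the quantitative bound has a genuine gap. The entire content of the theorem is the inequality $p_{\text{succ}}\geq \frac{d^2}{N}\bigl(1-\frac{d^2-1}{N}\bigr)$, and your plan delegates this to a Schur--Weyl block-diagonalization of $\Lambda=\sum_i\eta^i$ followed by a sum over Young diagrams that you assert ``telescopes or bounds cleanly'' with the ``lost weight'' contributing ``at most the $d^2-1$ factor.'' No argument is given for that final combinatorial inequality, and it is not a routine step: the exact representation-theoretic evaluation of the PGM fidelity for general $d$ is substantially harder than the qubit case and is precisely where the work lies. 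As written, the proposal establishes the setup but not the bound.

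The paper avoids this machinery entirely. It proves a general lower bound on the PGM success probability for any uniform ensemble, $p_{\text{succ}}^{\text{pgm}}\geq \frac{1}{N\bar r\,\tr\bar\eta^2}$ with $\bar\eta=\frac{1}{N}\sum_i\eta^i$ and $\bar r$ the average rank, derived from the elementary operator inequalities $X^2\geq 2X-\id_{\supp X}$ and $2\sqrt{Y}\geq 3Y-Y^2$ applied to $X=(\id\otimes Y)^{-1/4}\rho_{IQ}(\id\otimes Y)^{-1/4}$ with $\rho_{IQ}=\sum_i\proj{i}\otimes\eta^i$ and $Y=\tr_I\rho_{IQ}$. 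Because the $\eta^i$ here are simple product states, the needed quantities are computed in two lines: $\tr(\eta^i)^2=d^{-(N-1)}$, $\tr\eta^i\eta^j=d^{-(N+1)}$ for $i\neq j$, and $\bar r=d^{N-1}$, giving $p_{\text{succ}}\geq\frac{d^2}{N}\bigl(1+\frac{d^2-1}{N}\bigr)^{-1}\geq\frac{d^2}{N}\bigl(1-\frac{d^2-1}{N}\bigr)$. If you want to salvage your approach, either carry out the representation-theoretic bookkeeping in full (a significant undertaking for general $d$), or replace that step with a purity-and-rank bound of this kind, which requires no diagonalization of $\Lambda$ at all. Your secondary concern about completing the PGM off the support of $\Lambda$ is legitimate but harmless, since any completion can only increase the success probability in a lower bound.
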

\noindent Due to the importance of port-based teleportation to our schemes for instantaneous computation, we give a detailed derivation of Theorem~\ref{thm:portbasedmain} in \ref{app:portbased}.

Instead of measuring average-case closeness to the identity channel, it is desirable to have worst-case bounds. In other words, we would like to show that the state on~$B$ at the end of the port-based teleportation scheme is close to the original input state on~$A$ for all inputs.  This requires using a distance measure different from the entanglement fidelity~\eqref{eq:entanglementfidelity}. A natural distance measure on the set of CPTP maps is the completely bounded trace norm or diamond norm denoted $\|\cdot \|_\diamond$. This norm is defined in terms of the trace norm 
\begin{equation*}
\|L\|_1 = \tr\sqrt{L^\dagger L}\ ,\qquad L\in \cB(A)\ .
\end{equation*}  The trace norm induces a norm
\begin{equation*}
\|\Omega\|_1 =\max_{L\in\cB(A): \|L\|_1\leq 1} \|\Omega(L)\|_1
\end{equation*}
on the set of superoperators $\Omega:\cB(A)\rightarrow\cB(B)$. The diamond norm is defined as
\begin{equation}
\|\Omega\|_\diamond =\sup_{k\geq 1}\|\Omega\otimes\cI_{\mathbb{C}^k}\|_1\ \label{eq:diamondnorm}
\end{equation}
where $\cI_{\mathbb{C}^k}$ is the identity (super)operator on $\cB(\mathbb{C}^k)$. 

Given two CPTP maps $\cE:\cB(A)\rightarrow\cB(B)$ and  $\cF:\cB(A)\rightarrow\cB(B)$,~$\|\cE-\cF\|_{\diamond}$ is a natural measure of their distance because of the following operational interpretation. The quantity
\begin{equation*}
\frac{1}{2}+\frac{1}{4}\|\cE-\cF\|_{\diamond}
\end{equation*}
is equal to the probability of successfully distinguishing~$\cE$ from~$\cF$ when a single instance of either one of these channels is provided with equal prior probability. This implies that the diamond-distance can only decrease under composition of maps. 

\begin{corollary}\label{corollary:diamon}
Let $\ket{\Phi}=\frac{1}{\sqrt{d}}\sum_{i=1}^d \ket{i}\ket{i}$ be the maximally entangled state.  There is a POVM
$\{E^i_{AA'^N}\}_{i=1}^N$ on $\mathbb{C}^d\otimes(\mathbb{C}^d)^{\otimes N}$ such that the CPTP map~$\cE_{\ket{\Phi}^{\otimes N}}$ defined by~\eqref{eq:multioutputchannel} satisfies
\begin{equation*}
\|\cE_{\ket{\Phi}^{\otimes N}}-\cI_{\mathbb{C}^d}\|_\diamond\leq \frac{4d^2}{\sqrt{N}} \ .
\end{equation*}
\end{corollary}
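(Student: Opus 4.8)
The plan is to derive the worst-case (diamond-norm) estimate from the average-case bound of Theorem~\ref{thm:portbasedmain} in three standard steps: pass to Choi states, use the Fuchs--van de Graaf inequality to turn a fidelity bound into a trace-norm bound, and then use the standard inequality bounding the diamond norm of a Hermiticity-preserving map by the trace norm of its action on the maximally entangled state.

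First I would observe that, by the definition~\eqref{eq:entanglementfidelity}, the entanglement fidelity $F(\cE_{\ket\Phi^{\otimes N}})$ equals $\bra\Phi\,(\cE_{\ket\Phi^{\otimes N}}\otimes\cI_C)(\proj\Phi_{AC})\,\ket\Phi$, which is precisely the squared Uhlmann fidelity between the Choi state $\sigma:=(\cE_{\ket\Phi^{\otimes N}}\otimes\cI_C)(\proj\Phi_{AC})$ of the port-based teleportation channel and the pure Choi state $\proj\Phi_{BC}$ of $\cI_{\mathbb{C}^d}$. The Fuchs--van de Graaf inequality $\tfrac12\|\rho-\tau\|_1\le\sqrt{1-F(\rho,\tau)^2}$ applied to $\rho=\proj\Phi_{BC}$ and $\tau=\sigma$, combined with Theorem~\ref{thm:portbasedmain}, then gives
\[
\tfrac12\big\|(\cE_{\ket\Phi^{\otimes N}}\otimes\cI_C)(\proj\Phi_{AC})-\proj\Phi_{BC}\big\|_1\;\le\;\sqrt{1-F(\cE_{\ket\Phi^{\otimes N}})}\;\le\;\sqrt{\tfrac{d^2-1}{N}}\;\le\;\tfrac{d}{\sqrt N}\ .
\]
Next I would invoke the standard bound that for any Hermiticity-preserving superoperator $\Omega:\cB(\mathbb{C}^d)\to\cB(\mathbb{C}^d)$ one has $\|\Omega\|_\diamond\le d\,\|(\Omega\otimes\cI_C)(\proj\Phi_{AC})\|_1$, and apply it to $\Omega=\cE_{\ket\Phi^{\otimes N}}-\cI_{\mathbb{C}^d}$, which is Hermiticity-preserving as a difference of CPTP maps. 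Together with the previous display this yields $\|\cE_{\ket\Phi^{\otimes N}}-\cI_{\mathbb{C}^d}\|_\diamond\le d\cdot\tfrac{2d}{\sqrt N}=\tfrac{2d^2}{\sqrt N}\le\tfrac{4d^2}{\sqrt N}$, which is the claim (in fact with room to spare in the constant).

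Both quoted inequalities are classical, so there is no genuine obstacle; the step deserving the most care is the passage from ``closeness on the maximally entangled input'' to ``closeness in diamond norm.'' Here one uses that in~\eqref{eq:diamondnorm} it suffices to take the ancilla $C$ of dimension $d$, that $\|(\Omega\otimes\cI_C)(X)\|_1$ over Hermitian $X$ with $\|X\|_1\le1$ is maximized on pure states $\proj\psi_{AC}$ (this is where Hermiticity-preservation of $\Omega$ enters), and that any such $\ket\psi_{AC}$ can be written as $\sqrt d\,(\id_A\otimes M_C)\ket\Phi_{AC}$ for some operator $M$ on $C$ with $\|M\|_\infty\le\|M\|_2=1$; since $\Omega\otimes\cI_C$ commutes with conjugation by operators supported on $C$, this gives $\|(\Omega\otimes\cI_C)(\proj\psi_{AC})\|_1\le d\,\|M\|_\infty^2\,\|(\Omega\otimes\cI_C)(\proj\Phi_{AC})\|_1\le d\,\|(\Omega\otimes\cI_C)(\proj\Phi_{AC})\|_1$. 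Note that the whole argument is black-box: only the fidelity bound of Theorem~\ref{thm:portbasedmain} is used, not the explicit form of the POVM $\{E^i_{AA'^N}\}_i$.
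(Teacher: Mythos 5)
Your proof is correct and follows essentially the same route as the paper: pass to Choi states, apply the Fuchs--van de Graaf inequality to the fidelity bound of Theorem~\ref{thm:portbasedmain}, and then bound the diamond norm by a dimension factor times the trace norm of the Choi-state difference via the decomposition $\ket{\psi}_{AC}=\sqrt{d}\,(\id_A\otimes M_C)\ket{\Phi}_{AC}$. Your direct use of $\|M X M^\dagger\|_1\le\|M\|_\infty^2\|X\|_1$ even yields the slightly sharper constant $2d^2/\sqrt{N}$ in place of the paper's $4d^2/\sqrt{N}$, but the argument is otherwise the same.
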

\begin{proof}
Consider a superoperator $\Omega:\cB(A)\rightarrow\cB(B)$ and let $C\cong A$. The Choi-Jamiolkowski representation~\cite{choi,jamiolkowski} of~$\Omega$ is the operator
\begin{equation*}
J(\Omega):=(\Omega\otimes\cI_C)(\proj{\Phi}_{AC})\ .
\end{equation*}
$J(\Omega)$ is a quantum state if $\Omega$ is a CPTP map.
Since $J(\cI_{\mathbb{C}^d})=\proj{\Phi}$, the entanglement fidelity $F(\cE)$ is equal to the overlap  of the states $J(\cI_{\mathbb{C}^d})$ and $J(\cE)$, i.e.,
\begin{equation*}
F(\cE)=\tr(J(\cI_{\mathbb{C}^d})J(\cE))\ . 
\end{equation*}
Because $J(\cI_{\mathbb{C}^d})$ is pure, we can use the general inequality (see e.g.,~\cite{fuchsgraaf97})
\begin{equation*}
\frac{1}{2}\big\|\rho-\proj{\Psi}\big\|_1\leq  \sqrt{1-\bra{\Psi}\rho\ket{\Psi}}
\end{equation*}
bounding the distance between a pure state $\ket{\Psi}$ and a mixed state $\rho$. We conclude that 
\begin{equation}
\frac{1}{2}\|J(\cE_{\ket{\Phi}^{\otimes N}})-J(\cI_{\mathbb{C}^d})\|_1\leq \sqrt{1-F(\cE)}\ .\label{eq:choibound}
\end{equation}
The claim then follows from the linearity of the map $J$, inequality~\eqref{eq:choibound} and the following lemma applied to~$\Omega=\cE_{\ket{\Phi}^{\otimes N}}-\cI_{\mathbb{C}^d}$.
\end{proof}

\begin{lemma} For every two CPTP maps $\Omega_1, \Omega_2:\cB(A)\rightarrow\cB(B)$ we have
\begin{equation*}
\|\Omega_1 - \Omega_2\|_\diamond \leq 2 (\dim A) \|J(\Omega_1) - J(\Omega_2)\|_1\ .
\end{equation*}
\end{lemma}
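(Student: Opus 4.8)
The plan is to bound $\|\Omega_1 - \Omega_2\|_\diamond$ by relating the diamond norm directly to the trace norm of the Choi–Jamiolkowski operator, using the fact that the input to any channel can be purified with an ancilla of dimension at most $\dim A$. Concretely, write $\Omega = \Omega_1 - \Omega_2$, a Hermiticity-preserving map with $J(\Omega) = J(\Omega_1) - J(\Omega_2)$. By definition $\|\Omega\|_\diamond = \sup_{k} \|\Omega \otimes \cI_{\mathbb{C}^k}\|_1$, and I claim the supremum is attained already at $k = \dim A$: any operator $L \in \cB(A \otimes \mathbb{C}^k)$ with $\|L\|_1 \le 1$ can be reduced to the relevant case by noting that $\Omega$ acts only on the $A$ factor, so it suffices to consider $L$ supported on $A \otimes \mathbb{C}^{\dim A}$ (a Schmidt-type argument: decompose $L$ and observe the $A$-rank is at most $\dim A$). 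This is the step I expect to require the most care — making precise that enlarging the ancilla beyond $\dim A$ cannot help, which is the standard but slightly technical fact underlying the finite-dimensional characterization of the diamond norm.

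Once $k = d_A := \dim A$ is fixed, I would bound $\|(\Omega \otimes \cI_{\mathbb{C}^{d_A}})(L)\|_1$ for $\|L\|_1 \le 1$. Write $L$ in terms of its action relative to the maximally entangled state on $A \otimes C$ with $C \cong \mathbb{C}^{d_A}$: there exist operators $X, Y$ on $C$ (with norms controlled by $\|L\|_1 \le 1$) such that $L = (X \otimes \id_C)\, d_A \proj{\Phi}_{AC}\, (Y^\dagger \otimes \id_C)$ up to the usual index-transpose identities $(M \otimes \id)\ket{\Phi} = (\id \otimes M^\top)\ket{\Phi}$. Applying $\Omega \otimes \cI_C$ and pulling the $C$-side operators out of the map gives $(\Omega \otimes \cI_C)(L) = d_A\,(\id_B \otimes X'^\top)\, J(\Omega)\, (\id_B \otimes Y'^\top)^\dagger$ for suitable $X', Y'$. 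Then submultiplicativity of the trace norm under multiplication by bounded operators, $\|AMB\|_1 \le \|A\|_\infty \|M\|_1 \|B\|_\infty$, together with $d_A \|X\|_\infty, d_A\|Y\|_\infty \le$ (constants absorbing the $\|L\|_1 \le 1$ normalization), yields $\|(\Omega \otimes \cI_C)(L)\|_1 \le 2 d_A \|J(\Omega)\|_1$, with the factor $2$ coming from splitting $L$ into two terms or from the normalization bookkeeping in the purification.

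An alternative, cleaner route I would try first: use the known identity $\|\Omega\|_\diamond = \| (\id_B \otimes \rho^{-1/2}) J(\Omega) (\id_B \otimes \rho^{-1/2}) \|$ optimized over states $\rho$ on $C$ — or more simply, the elementary bound $\|\Omega\|_\diamond \le d_A \|\Omega\|_1 \le 2 d_A \|J(\Omega)\|_1$ is \emph{not} quite enough directly, so instead invoke: for any Hermiticity-preserving $\Omega$, $\|\Omega\|_\diamond \le d_A \,\|(\Omega\otimes\cI)(\proj{\Phi})\|_1$ when one allows the output trace-norm bound via the maximally entangled input and a dimension factor. The key inequality is that feeding $\Omega \otimes \cI$ a general trace-class operator is equivalent (up to a $d_A$ factor and bounded side-operators) to feeding it $\proj{\Phi}$, which is exactly $J(\Omega)/d_A$ after normalization; the trace norm is then controlled by $\|J(\Omega)\|_1$. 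I would conclude by assembling: $\|\Omega_1 - \Omega_2\|_\diamond \le 2 d_A \|J(\Omega_1) - J(\Omega_2)\|_1$. The main obstacle, to reiterate, is rigorously justifying the reduction to ancilla dimension $d_A$ and tracking the constant so it comes out as $2 \dim A$ rather than something larger.
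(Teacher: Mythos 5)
Your proposal follows essentially the same route as the paper's proof: restrict the diamond-norm supremum to an ancilla of dimension $\dim A$ (a standard fact the paper also just cites), write the input as a side operator $M_C$ acting on the maximally entangled state with $\tr M_C^\dagger M_C = \dim A$, pull $M_C$ onto $J(\Omega_1)-J(\Omega_2)$, and finish with H\"older's inequality $\|AMB\|_1\le\|A\|_\infty\|M\|_1\|B\|_\infty$ together with $\|M_C\|_\infty^2\le\tr M_C^\dagger M_C$. The only loose end in your write-up, the provenance of the factor $2$, is harmless: the paper gets it cleanly from $\|X\|_1=2\max_{0\le P\le\id}\tr(PX)$ for the traceless Hermitian output, and in fact the argument even yields the slightly stronger bound without the $2$.
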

\begin{proof}
Let $\Omega= \Omega_1 - \Omega_2$ and $C\cong A$. It is well-known (see e.g.,~\cite{watrous09}) that for any such~$\Omega$ 
\begin{eqnarray*}
\|\Omega\|_\diamond =  \| \Omega\otimes \cI_{C} \|_{1}  &=   \max_{\ket{\Psi}_{AC}}  \| (\Omega\otimes \cI_C) (\proj{\Psi})   \|_{1}\\
&=2\max_{\ket{\Psi}_{AC}, 0\leq P_{BC}\leq \id_{BC}}\tr\left(P_{BC} (\Omega\otimes \cI_C) (\proj{\Psi}) \right)\ .
\end{eqnarray*}
On the other hand, for every $\ket{\Psi}_{AC}$ there exists $M_C$ such that $\ket{\Psi}_{AC} = \id_A \otimes M_C\ket{\Phi}_{AC}$, where due to normalization $\tr M_C^{\dagger} M_C = \dim A$. Then we have
\begin{equation*}
\Omega \otimes \cI_C ( \proj{\Psi}) = \id_B\otimes M_C\, J(\Omega)_{BC} \,\id_B \otimes M_C^{\dagger}\ .
\end{equation*}
As a result,
\begin{eqnarray*}
\|\Omega\|_\diamond  &=2\max_{
M_{C}, 0\leq P_{BC}\leq \id_{BC}}\tr\left(   (\id_B\otimes M^{\dagger}_C\, P_{BC}\,  \id_B\otimes M_C)  J(\Omega)_{BC}\right)\\
& \leq 2 \max_{
M_{C}, 
0\leq P_{BC}\leq \id_{BC}}\| \id_B\otimes M^{\dagger}_C\, P_{BC}\,  \id_B\otimes M_C\|_{\infty} \cdot \|J(\Omega)_{BC}\|_1 \\
& \leq 2(\dim A) \|J(\Omega)_{BC}\|_1\ ,
\end{eqnarray*}
where $\|\cdot \|_{\infty}$ denotes the operator norm, and in the last line we use the normalization of $M_C$.
\end{proof}

\section{Protocols for instantaneous measurement and computation}
Here we propose and analyze two novel protocols for instantaneous measurement and computation. Both protocols depend on a parameter~$N$ which captures the amount of entanglement consumed and the accuracy achieved by the protocol. The first protocol $\pi_N(\cO)$  gives an instantaneous realization of a (non-local) POVM~$\cO=\{O^\gamma_{AB}\}_{\gamma}$.  The second protocol~$\pi_N(U)$ implements a non-local unitary~$U_{AB}$. In the following descriptions, we  divide up the instantaneous (simultaneous) application of measurements by both Alice and Bob into several stages to simplify the analysis. Note, however, that the actions of Alice and Bob commute and do not have to be performed in the prescribed order.

\begin{protocol}{$\pi_N(\cO)$}{implementation of POVM $\cO=\{O^\gamma_{AB}\}_\gamma$ on a state $\ket{\Psi}_{AB}\in(\mathbb{C}^2)^{\otimes n}\otimes (\mathbb{C}^2)^{\otimes n}$. }{\newline
Alice and Bob share $n$~ebits of auxiliary entanglement in $A':B'$, and for every $j\in\{1,\ldots,N\}$, $2n$~ebits of entanglement in $A_j'':B_j''$. We write~$A''^N= A''_1\cdots A''_N$ and $B''^N=B''_1\cdots B''_N$.}\label{proto:POVM}

\item[1(a).] Bob performs a teleportation measurement between $B$ and $B'$ with outcomes $\myvec{t}\in\{0,1,2,3\}^n$. As a result, Alice holds the bipartite state $(\id_A\otimes \sigma_{\myvec{t}})\ket{\Psi}_{AA'}$.

\item[1(b).]
Alice applies the port-based teleportation-measurement on 
her $2n$~qubits in systems $AA'$ and her part of the shared entanglement in $A''^N:B''^N$. She gets an index~$i\in\{1,\ldots,N\}$, and  Bob obtains $(\id_A\otimes\sigma_{\myvec{t}})\ket{\Psi}$ (with high fidelity) in the $i$-th system $B''_i$.
\item[1(c).]
For each $j\in\{1,\ldots,N\}$, Bob first applies $\id\otimes\sigma_{\myvec{t}}$ to~$B''_j$ and then measures it using the POVM $\cO$. Let $\gamma_j$ be the outcome of this measurement. 
\item[2.]
Alice sends $i$, and Bob sends the list $\{(j,\gamma_j)\}_j$ to Charlie. 
\item[3.]
Upon receiving this classical information, Charlie outputs $\gamma_i$.
\end{protocol}

\begin{protocol}{$\pi_{N}(U)$}{implementation of a unitary~$U_{AB}$ on a state $\ket{\Psi}_{AB}\in (\mathbb{C}^2)^{\otimes n}\otimes (\mathbb{C}^2)^{\otimes n}$}{
\newline
Alice and Bob share auxiliary systems $A'A''^N:B'B''^N$ as in protocol~$\pi_N(\cO)$. Here we assume that each $A''_j$ is partitioned into an $A$-part and a $B$-part (with $n$~qubits each), and similarly for~$B''_j$. They additionally share, for every $j\in\{1,\ldots,N\}$, $n$~ebits of entanglement in systems $A'''_j:B'''_j$. 
}\label{proto:unitary}
\item[1(a)-1(b).]  Execute steps~1(a) and 1(b) of protocol~$\pi_N(\cO)$. 
\item[1(c).] For every $j\in\{1,\ldots,N\}$, Bob applies $U(\id\otimes \sigma_{\myvec{t}})$ to $B''_j$. Then he performs a (usual) teleportation measurement between the $A$-part of $B''_j$ and $B'''_j$. Letting ${\myvec{v}}_j\in\{0,1,2,3\}^{n}$ be the outcome of this measurement, systems $A'''_i$ and the $B$-part of $B''_i$ now contain $(\sigma_{\myvec{v}_i}\otimes\id)U \ket{\Psi}$ (with high fidelity). 
\item[2.] Alice sends $i$ to Bob, and Bob sends the list $\{(j,\myvec{v}_j)\}_j$ to Alice.
\item[3.]
Alice discards everything except $A'''_i$, on which she applies $\sigma_{\myvec{v}_i}$. Bob discards everything except the $B$-part of $B''_i$.
\end{protocol}

To quantitatively express the accuracy of these protocols, we use the diamond norm. For this to make sense for POVMs,  we regard a POVM $\cE=\{E_i\}_{i}$ as a CPTP map with output diagonal in the standard basis, i.e., $\cE(\rho)=\sum_i \tr(E_i\rho)\proj{i}$. The following theorem is an easy consequence of Corollary~\ref{corollary:diamon} and the fact that the diamond-distance does not increase under the composition of CPTP maps. 
\begin{theorem}\label{thm:maintheorem}
The protocols introduced in this section have the following properties for any $\varepsilon>0$. 
\begin{enumerate}[(i)]
\item\label{it:instmeasurement}
Let $\cO=\{O^\gamma_{AB}\}_\gamma$ be a bipartite POVM on $(\mathbb{C}^2)^{\otimes n}\otimes(\mathbb{C}^2)^{\otimes n}$. Set $N:= 2^{8n+4}/\varepsilon^2$ and let $\cM$ be the POVM defined by protocol~$\pi_N(\cO)$. Then~$\cM$ approximates~$\cO$ up to accuracy
\begin{equation*}
\|\cM-\cO\|_\diamond \leq \varepsilon 
\end{equation*}
and consumes
\begin{equation*}
n\left(1+\frac{2^{8n+5}}{\varepsilon^2}\right)
\end{equation*}
ebits of entanglement. 
\item
Let $U=U_{AB}$  be a bipartite POVM on $(\mathbb{C}^2)^{\otimes n}\otimes(\mathbb{C}^2)^{\otimes n}$. Set $N:=2^{8n+4}/\varepsilon^2$ and let $\cE$ be the CPTP map defined by protocol~$\pi_N(U)$. Then~$\cE$ approximates~$U$ up to accuracy
\begin{equation*}
\|\cE-U\|_\diamond \leq \varepsilon 
\end{equation*}
while consuming
\begin{equation*}
n\left(1+\frac{3\cdot 2^{8n+4}}{\varepsilon^2}\right)\ .
\end{equation*}
ebits of entanglement. 
\end{enumerate}
\end{theorem}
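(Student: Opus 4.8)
The plan is to verify correctness and entanglement consumption for each protocol separately, in both cases reducing everything to Corollary~\ref{corollary:diamon} together with the fact that $\|\cdot\|_\diamond$ is non-increasing under pre- and post-composition with CPTP maps. First consider $\pi_N(\cO)$. The key observation is that the only step which is not exact is the port-based teleportation in step~1(b): if port-based teleportation were replaced by the identity channel $\cI_{(\mathbb{C}^2)^{\otimes 2n}}$ from Alice's $2n$~qubits $AA'$ (carrying $(\id\otimes\sigma_{\myvec t})\ket\Psi$) to Bob's system $B''_i$, then in step~1(c) Bob would undo the Pauli $\id\otimes\sigma_{\myvec t}$ exactly, apply $\cO$, and Charlie would output the correct label — so the idealized protocol realizes $\cO$ exactly. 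Therefore I would write the actual channel $\cM$ as a composition of CPTP maps in which one factor is $\cE_{\ket\Phi^{\otimes N}}$ (on a $d=2^{2n}$-dimensional system, since Alice teleports $2n$~qubits through the $2n$-ebit resource $A''^N{:}B''^N$, each $A''_j$ being $2n$~qubits hence dimension $2^{2n}$), and the corresponding idealized channel $\cO$ differs only by replacing that factor with $\cI_{\mathbb{C}^d}$. Then monotonicity gives $\|\cM-\cO\|_\diamond\le \|\cE_{\ket\Phi^{\otimes N}}-\cI_{\mathbb{C}^d}\|_\diamond\le 4d^2/\sqrt N = 4\cdot 2^{4n}/\sqrt N$. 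Setting $N=2^{8n+4}/\varepsilon^2$ makes this $\le\varepsilon$. For the ebit count: $n$~ebits in $A'{:}B'$ plus $N$ blocks of $2n$~ebits each in $A''_j{:}B''_j$, i.e.\ $n+2nN = n(1+2^{8n+5}/\varepsilon^2)$.

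For $\pi_N(U)$ the structure is identical but there are two teleportation-type steps to track. Step~1(b) is again port-based teleportation on a $d=2^{2n}$-dimensional system, contributing the same $\|\cE_{\ket\Phi^{\otimes N}}-\cI_{\mathbb{C}^d}\|_\diamond\le 4\cdot2^{4n}/\sqrt N$ error. The ordinary teleportation measurement in step~1(c) between the $A$-part of $B''_j$ and $B'''_j$, together with the Pauli correction $\sigma_{\myvec v_i}$ applied by Alice in step~3, is \emph{exact} — standard teleportation has no error — so it contributes nothing to the diamond distance; it simply moves the $A$-half of $U\ket\Psi$ from Bob to Alice. Hence the idealized protocol (port-based teleportation replaced by identity) realizes $U$ exactly, and monotonicity again yields $\|\cE-U\|_\diamond\le 4\cdot2^{4n}/\sqrt N\le\varepsilon$ for $N=2^{8n+4}/\varepsilon^2$. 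The ebit count is now $n$ (in $A'{:}B'$) plus $2nN$ (in $A''_j{:}B''_j$) plus $nN$ (in $A'''_j{:}B'''_j$), i.e.\ $n+3nN = n(1+3\cdot2^{8n+4}/\varepsilon^2)$.

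The one point requiring a little care — and the main thing I would spell out — is the \emph{bookkeeping of which system the port-based channel acts on and with what dimension}, so that plugging into Corollary~\ref{corollary:diamon} is legitimate. Concretely: in step~1(b) the input to the port-based scheme is Alice's $AA'$ register, which is $2n$~qubits, so $d=2^{2n}$ and the resource is $N$ copies of the maximally entangled state on $\mathbb{C}^d\otimes\mathbb{C}^d$, matching the $2n$-ebit blocks $A''_j{:}B''_j$. One must also check that the Pauli byproduct $\sigma_{\myvec t}$ from Bob's step-1(a) teleportation commutes correctly through: since port-based teleportation followed by Bob re-applying $\id\otimes\sigma_{\myvec t}$ is what the idealized comparison channel does exactly, the byproduct is handled inside the "rest" of the composition and never interacts with the $\cE_{\ket\Phi^{\otimes N}}$ versus $\cI_{\mathbb{C}^d}$ substitution. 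Everything else is the routine observation that composing CPTP maps on either side of the swapped factor cannot increase $\|\cdot\|_\diamond$, which is exactly the operational distinguishability statement recorded before Corollary~\ref{corollary:diamon}. I expect no genuine obstacle — the theorem is essentially immediate once the protocols are written as "port-based teleportation sandwiched between exact operations."
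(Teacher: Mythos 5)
Your proposal is correct and follows exactly the paper's argument: the paper proves Theorem~\ref{thm:maintheorem} in one sentence by invoking Corollary~\ref{corollary:diamon} together with monotonicity of the diamond norm under composition, which is precisely the "port-based teleportation sandwiched between exact operations" decomposition you spell out, with the same dimension bookkeeping ($d=2^{2n}$) and the same ebit counts.
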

These protocols and their analysis can clearly be extended in a straightforward manner to multipartite (i.e., more than bipartite)  non-local POVMs and unitaries.

\section{A lower bound\label{sec:lowerbound}}
In this section, we show that there is a measurement on~$2n$ qubits which is not realizable instantaneously with fewer than $n/2$~ebits of entanglement. Our construction is based on mutually unbiased bases. A pair of orthonormal bases~$\{\ket{e^1_x}\}_{x=1}^d$ and~$\{\ket{e^2_y}\}_{y=1}^d$ of~$\mathbb{C}^d$ is called mutually unbiased if
\begin{equation*}
|\langle e^1_x|e^2_y\rangle|^2=\frac{1}{d}\qquad\textrm{ for all }\qquad x,y\in\{1,\ldots,d\}\ .
\end{equation*}
It is known~\cite{ivanovic81,woottersfields89,bandy02} that if $d=p^n$ is a power of a prime number, then a set of $d+1$~pairwise mutually unbiased bases~$\{\cB_a:=\{\ket{e^a_x}\}_{x=1}^d\}_{a=0}^d$ exists in $\mathbb{C}^d$. We will assume that $d$ is of this form (specifically~$d=2^n$).

\begin{theorem}\label{thm:guessingprobbound}
Suppose that two parties Alice and Bob share 
one of the states 
\begin{equation*}
\rho_{AB}^x = \frac{1}{d+1} \sum_{a=0}^d \proj{a}_A\otimes \proj{e^a_x}_B\ \qquad x\in\{1,\ldots,d\}\ ,
\end{equation*}
each with prior probability $1/d$, and additionally have an (arbitrary) shared entangled state~$\eta_{A'B'}$. They are asked to output $x$ with an instantaneous measurement (cf.~Fig.~\ref{fig:instmeasurement}).   Then their success probability  is upper bounded by
\begin{equation}
p_{succ} \leq \frac{2 \dim B'}{\sqrt{d}}.\label{eq:predictionprob}
\end{equation}
\end{theorem}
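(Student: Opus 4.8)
The plan is to analyze the structure of an instantaneous measurement as in Figure~\ref{fig:instmeasurement} and reduce the task to a purely one-party guessing problem, where the bound comes from the mutual unbiasedness of the bases. After Alice and Bob apply their local POVMs $\cE=\{E^\alpha_{AA'}\}_\alpha$ and $\cF=\{F^\beta_{BB'}\}_\beta$ to $\rho^x_{AB}\otimes\eta_{A'B'}$ and Charlie outputs $\hat\gamma=g(\alpha,\beta)$, the success probability is
\begin{equation*}
p_{succ}=\frac{1}{d}\sum_{x=1}^d\ \sum_{(\alpha,\beta):g(\alpha,\beta)=x}\tr\bigl((E^\alpha_{AA'}\otimes F^\beta_{BB'})(\rho^x_{AB}\otimes\eta_{A'B'})\bigr)\ .
\end{equation*}
The first key step is to fix Bob's outcome $\beta$ and condition on it. The residual state on Alice's side $AA'$ is (up to normalization) obtained by acting with Bob's measurement operator $F^\beta$ on the $B$-part through the entangled state $\eta_{A'B'}$; crucially, because $\rho^x_{AB}$ has the classically-correlated form $\frac{1}{d+1}\sum_a\proj{a}_A\otimes\proj{e^a_x}_B$, Bob's operator gets contracted against the MUB vectors $\ket{e^a_x}$. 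The idea is that for each value of $a$ (the classical label Alice holds), conditioned on $(\alpha,\beta)$, Alice's knowledge of $x$ is filtered through the overlaps $\braket{e^a_x}{e^a_{x'}}$, but Bob's state $\ket{e^a_x}_B$ lives in the \emph{same} basis $\cB_a$ for all $x$, so conditioning on Bob's $\beta$ does not by itself distinguish the $x$'s within a fixed $a$. The mutual unbiasedness enters when one tries to use the correlation between the $A$-label $a$ and the basis: the vectors $\{\ket{e^a_x}\}_a$ for different $a$ are maximally incompatible.

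Concretely, I would bound $p_{succ}$ by replacing the joint strategy with Bob simply sending his (classical) share of information and Alice performing a measurement to guess $x$. The cleanest route: since $\rho^x_{AB}$ is separable and of the stated block-diagonal form, write Bob's system as holding $\ket{e^a_x}$ once $a$ is known. Use the fact that Bob's local operation followed by communication of $\beta$ is a meas-and-prepare-like channel on $B$ of Schmidt rank at most $\dim B'$ into the classical register $\beta$; equivalently, the information Bob can funnel to Charlie about $\ket{e^a_x}_B$ is, by the entanglement-assisted structure, limited by $\dim B'$. Then the guessing probability of $x$ from $(\alpha,\beta)$ obeys a bound of the form $p_{succ}\le \dim B'\cdot p^{\mathrm{no\text{-}ent}}_{succ}$, and the entanglement-free guessing probability is controlled by $\max_{x}\|\text{something}\|$ which, via $|\braket{e^a_x}{e^a_{y}}|^2$ summed against MUB relations, is $O(1/\sqrt d)$. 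The factor $2$ and the $\sqrt d$ should drop out of a Cauchy–Schwarz / operator-norm estimate: one bounds the relevant overlap operator $\sum_a \proj{a}\otimes \proj{e^a_x}$ against $\sum_a\proj{a}\otimes\proj{e^a_{y}}$ for $x\ne y$ and uses $|\braket{e^a_x}{e^a_{y}}|\le 1/\sqrt d$ together with orthogonality within each basis.

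The main obstacle I anticipate is handling the shared entanglement $\eta_{A'B'}$ rigorously: the naive "Alice just measures" reduction fails because entanglement genuinely helps, and the correct statement is only that it helps by at most a $\dim B'$ factor. Making this precise requires an argument that, conditioned on $\beta$, Bob's reduced operation on $B$ has a Kraus/Stinespring description whose "spread" is bounded by $\dim B'$, so that after tracing out $B'$ one pays at most this dimension factor relative to the unentangled case — this is essentially a rank argument on the operators $\{ \bra{k}_{B'}\eta^{1/2}_{A'B'}\}_k$, $k\le\dim B'$. Once that reduction is in place, the remaining estimate is the genuinely information-theoretic content: for any fixed-$x$ "signal state" on $B$, the distinguishability from the other $d-1$ signal states is suppressed because each $\ket{e^a_x}$ sits in a basis maximally unbiased to all the others, yielding the $1/\sqrt d$. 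I would carry out the dimension-factor reduction first, then finish with the MUB overlap bound and a Cauchy–Schwarz step to get exactly $2\dim B'/\sqrt d$.
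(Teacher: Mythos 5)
Your high-level architecture---pay a factor of $\dim B'$ for the shared entanglement via a dimension/positivity argument, then extract a $1/\sqrt d$ from mutual unbiasedness---does match the paper's strategy in outline, and your first step can be made rigorous essentially as you anticipate: since $A$ is classical, Alice's message is WLOG obtained from a POVM $\{E^{a,\alpha}_{A'}\}_\alpha$ on $A'$ alone, and the operator inequality $\tau^{a,\alpha}_{B'}\le\tr(\tau^{a,\alpha}_{B'})\,\id_{B'}$ applied to $\tau^{a,\alpha}_{B'}=\tr_{A'}((E^{a,\alpha}_{A'}\otimes\id_{B'})\eta_{A'B'})$, combined with $\sum_\beta\tr(F^\beta_{BB'})=d\dim B'$, yields exactly the $\dim B'$ overhead with no $\dim A'$ factor. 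So the ``main obstacle'' you flag is in fact the easy half.

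The genuine gap is in the second step, where your sketch is not just incomplete but aimed at the wrong quantity. You propose to bound overlaps between the signal states for different $x$, invoking $|\braket{e^a_x}{e^a_y}|\le 1/\sqrt d$; but for $x\ne y$ these two vectors lie in the \emph{same} basis and are orthogonal, so the states $\rho^x_{AB}$ are pairwise orthogonal and perfectly distinguishable by a global measurement (that is precisely what the POVM of Eq.~\eqref{eq:difficultpovm} does). No estimate on pairwise overlaps of the $\rho^x$ can therefore produce the theorem; the entire obstruction is the locality constraint, and a Cauchy--Schwarz step on signal-state overlaps will not surface a $1/\sqrt d$. What is actually needed, after the $\dim B'$ reduction, is the operator-norm bound $\bigl\|\sum_{a,\alpha}\tr(\tau^{a,\alpha}_{B'})\proj{e^a_{g(\alpha,\beta)}}_B\bigr\|_\infty\le (d+2)/\sqrt d+1$ for each fixed $\beta$. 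Since $\sum_\alpha\tr(\tau^{a,\alpha}_{B'})=1$ for each $a$, this is a weighted sum of projectors of total trace $d+1$ drawing on each of the $d+1$ bases, and it is unbiasedness \emph{across different values of $a$} that forces the norm down from $d+1$ to $O(\sqrt d)$. The paper proves this by writing the sum as a partial trace of a rank-one projector, passing to the Gram matrix with entries proportional to $\braket{e^{a'}_{g(\alpha',\beta)}}{e^a_{g(\alpha,\beta)}}$, and splitting it into an off-block part of norm at most $(d+1)/\sqrt d$ (using $|\braket{e^{a'}_\cdot}{e^a_\cdot}|=1/\sqrt d$ for $a\ne a'$) and a block-diagonal part of norm $O(1)$. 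That Gram-matrix estimate is the real content of the proof and is absent from your proposal.
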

\begin{proof}
Because of the linearity of the success probability as a function of the input ensemble, it suffices to prove the bound~\eqref{eq:predictionprob} in the case where Alice and Bob receive $\ket{a}$ and $\ket{e^a_x}$ respectively, each with probability $1/d(d+1)$.

Since Alice's input~$A$ is a classical register (containing $a$),
 we may assume without loss of generality that her message~$\alpha$ to Charlie is determined by measuring the register~$A'$ using a POVM $\{E_{A'}^{a, \alpha}\}_{\alpha}$ which depends on $a$. Letting $\{F_{BB'}^{\beta}\}_{\beta}$ be Bob's measurement and $g$ be the classical post-processing function, the success probability is equal to 
\begin{eqnarray*}
p_{\text{succ}} &=\frac{1}{d(d+1)} \sum_{a, x} \,\sum_{\alpha, \beta: g(\alpha, \beta) = x}  \tr \left(  (E_{A'}^{a, \alpha} \otimes F_{BB'}^{\beta}) (\proj{e^a_x}_B \otimes \eta_{A'B'} )   \right)\\
& = \frac{1}{d(d+1)} \sum_{a, x} \, \sum_{\alpha, \beta: g(\alpha, \beta)= x}  \tr \left(   F_{BB'}^{\beta}   (  \proj{e^a_x}_B  \otimes \tau_{B'}^{a, \alpha} )    \right)     \ ,
\end{eqnarray*}
where 
\begin{equation*}
\tau_{B'}^{a, \alpha} = \tr_{A'} \left(   (E_{A'}^{a, \alpha} \otimes \id_B')  \eta_{A'B'}  \right)\ .
\end{equation*}
Using 
\begin{equation}
\tau_{B'}^{a, \alpha} \leq \tr(\tau_{B'}^{a, \alpha}) \id_{B'}\ ,\label{eq:dimensionboundentangled}
\end{equation} we have
\begin{eqnarray*}
p_{\text{succ}} &\leq \frac{1}{d(d+1)} \sum_{\beta} \, \tr \left(  F_{BB'}^{\beta}   \left( \sum_{a, \alpha}   \tr(\tau_{B'}^{a, \alpha})  \proj{e^a_{g(\alpha, \beta)}}_B  \otimes \id_{B'} \right)    \right)   \\
& \leq  \frac{1}{d(d+1)} \sum_{\beta} \, \tr (  F_{BB'}^{\beta}  )  \left\| \sum_{a, \alpha}   \tr(\tau_{B'}^{a, \alpha})  \proj{e^a_{g(\alpha, \beta)}}_B  \otimes \id_{B'} \right\|_{\infty}     \\
& =  \frac{1}{d(d+1)} \sum_{\beta} \, \tr (  F_{BB'}^{\beta}  )  \left\| \sum_{a, \alpha}   \tr(\tau_{B'}^{a, \alpha})  \proj{e^a_{g(\alpha, \beta)}}_B  \right\|_{\infty} \ .
\end{eqnarray*}
If we show that 
\begin{equation}\label{eq:normineq}
 \left\| \sum_{a, \alpha}   \tr(\tau_{B'}^{a, \alpha})  \proj{e^a_{g(\alpha, \beta)}}_B  \right\|_{\infty} \leq \frac{d+2}{\sqrt{d}}+1\qquad\textrm{ for every }\beta\ ,
\end{equation}
the claim of the theorem follows because
\begin{eqnarray*}
p_{\text{succ}} & \leq   \frac{d+\sqrt{d}+2}{d(d+1)\sqrt{d}} \,\sum_{\beta} \, \tr (  F_{BB'}^{\beta}  )  =  \frac{d+\sqrt{d}+2}{d(d+1)\sqrt{d}} \tr \id_{BB'} \\
                           & =      \frac{d+\sqrt{d}+2}{(d+1)\sqrt{d}}\dim B'                                                          \leq  \frac{2\dim B'}{\sqrt{d}}\ .
\end{eqnarray*}
It remains to prove~\eqref{eq:normineq} for a fixed~$\beta$. Consider the (not necessarily normalized) vector 
\begin{equation*} 
\ket{V}_{BR} = \sum_{a, \alpha}  \sqrt{\tr( \tau_{B'}^{a, \alpha} )}  \ket{e^a_{g(\alpha, \beta)}}_B\otimes \ket{a, \alpha}_R\ ,
\end{equation*}
where $R$ is an auxiliary Hilbert space with orthonormal basis $\{\ket{a,\alpha}_R\}_{a,\alpha}$.
Denoting by~$M$ the matrix of interest in~\eqref{eq:normineq}, we have 
\begin{eqnarray*}
\| M\|_{\infty} & =\left\| \tr_{R} \proj{V}_{BR} \right\|_{\infty} =\left\|  \tr_B \proj{V}_{BR}  \right\|_{\infty}   \\
& = \left\|   \sum_{a, a', \alpha, \alpha'}  \sqrt{\tr( \tau_{B'}^{a, \alpha} )\tr( \tau_{B'}^{a', \alpha'}) } \, \langle e^{a'}_{g(\alpha', \beta)}  \vert e^a_{g(\alpha, \beta)} \rangle \vert a, \alpha\rangle \langle a', \alpha'\vert  \right\|_{\infty}\ .
\end{eqnarray*}
On the other hand, by a simple triangle inequality we obtain that for every matrix $(c_{ij})_{i, j}$, $\| (c_{ij})_{i, j}  \|_{\infty}  \leq \|  (\vert  c_{ij} \vert)_{i,j}  \|_{\infty} $, where $\vert \cdot \vert$ denotes the absolute value of a complex number. Therefore, 
\begin{equation*}
\| M \|_{\infty} \leq \mu + \nu
\end{equation*}
where  
\begin{equation*}
 \mu =   \left\|       \frac{1}{\sqrt{d}} \sum_{a, a', \alpha, \alpha'}   \sqrt{\tr( \tau_{B'}^{a, \alpha} )\tr( \tau_{B'}^{a', \alpha'}) }   \vert a, \alpha\rangle \langle a', \alpha'\vert                           \right\|_{\infty},
 \end{equation*}
 and
 \begin{equation*}
\nu= 
 \left\|   \sum_a    \sum_{\alpha, \alpha'}      \sqrt{\tr( \tau_{B'}^{a, \alpha} )\tr( \tau_{B'}^{a, \alpha'}) }  (\delta_{g(\alpha, \beta), g(\alpha', \beta)   }     - \frac{1}{\sqrt{d}} )      \vert a, \alpha\rangle \langle a, \alpha'\vert      \right\|_{\infty}\ .
\end{equation*}
The first summand can be bounded as
\begin{equation*}
\mu \leq \frac{1}{\sqrt{d}} \sum_{a, \alpha} \tr(\tau_{B'}^{a, \alpha})= \frac{1}{\sqrt{d}} \sum_a \tr (\tr_{A'} \eta_{A'B'})=\frac{d+1}{\sqrt{d}}\ .\label{eq:abound}
\end{equation*}
To bound the second term, we use the direct sum property~$\|C\oplus D\|_\infty=\max\{\|C\|_\infty,\|D\|_\infty\}$, the triangle inequality and the easily verified fact that
\begin{equation*}
\|K\|_\infty\leq \|K+L\|_\infty 
\end{equation*}
for  real symmetric matrices~$K$ and~$L$ with nonnegative entries. This gives
\begin{eqnarray} 
\nu&=&  \max_a    \left\|      \sum_{\alpha, \alpha'}      \sqrt{\tr( \tau_{B'}^{a, \alpha} )\tr( \tau_{B'}^{a, \alpha'}) } ( \delta_{g(\alpha, \beta), g(\alpha', \beta)   }     - \frac{1}{\sqrt{d}} )      \vert a, \alpha\rangle \langle a, \alpha'\vert      \right\|_{\infty}\nonumber\\
&\leq& \max_a  \left\|      \sum_{\alpha, \alpha'}      \sqrt{\tr( \tau_{B'}^{a, \alpha} )\tr( \tau_{B'}^{a, \alpha'}) }  \delta_{g(\alpha, \beta), g(\alpha', \beta)   }\ket{a,\alpha}\bra{a,\alpha'}
\right\|_{\infty} \nonumber\\
& & \qquad+\frac{1}{\sqrt{d}}   \left\|\sum_{\alpha,\alpha'}\sqrt{\tr( \tau_{B'}^{a, \alpha} )\tr( \tau_{B'}^{a, \alpha'}) } \ket{a,\alpha}\bra{a,\alpha'}\right\|_{\infty}\nonumber\\
&\leq&   (1        + \frac{1}{\sqrt{d}} )\max_a\left\|      \sum_{\alpha, \alpha'}      \sqrt{\tr( \tau_{B'}^{a, \alpha} )\tr( \tau_{B'}^{a, \alpha'}) }    \vert a, \alpha\rangle \langle a, \alpha'\vert      \right\|_{\infty}\nonumber\\
& = &(1+\frac{1}{\sqrt{d}} )\max_a \sum_\alpha \tr(\tau^{a,\alpha}_{B'})=1+\frac{1}{\sqrt{d}}\ .
\label{eq:bbound}
\end{eqnarray}
Combining~\eqref{eq:abound} and~\eqref{eq:bbound} yields~\eqref{eq:normineq}.
\end{proof}

This result may be expressed in terms of the diamond norm. Let $\{U_a\}_{a=0}^{d}$ be the set of unitaries that rotate the standard basis into a set of such mutually unbiased basis, i.e., $U_a\ket{x}=\ket{e^a_x}$ and let
\begin{equation*}
U_{AB}=\sum_{a=0}^d \proj{a}\otimes U^\dagger_a\ .
\end{equation*}
Then
\begin{equation}
\cO_{AB}=\{O_x=(\id_A\otimes\proj{x}_B)U_{AB}\}_{x=1}^d\ . \label{eq:difficultpovm}
\end{equation}
 is a POVM that exactly outputs $x$ under a randomly chosen $\rho_{AB}^x$. 

\begin{corollary}[Impossibility of instantaneous measurement]\label{thm:instantaneousimpossibility}
Let $\cM_{AB}=\{M_{x}\}_{x}$ be an instantaneous measurement 
on $\mathbb{C}^d\otimes\mathbb{C}^d$ 
implemented with shared entanglement~$\eta_{A'B'}$ as in Fig.~\ref{fig:instmeasurement}. If $\dim B'\leq \varepsilon \sqrt{d}$, then
\begin{equation*}
\|\cM_{AB}-\cO_{AB}\|_\diamond\geq 2(1- 2\varepsilon)\ , 
\end{equation*}
where $\cO_{AB}$ is the POVM~\eqref{eq:difficultpovm}.
\end{corollary}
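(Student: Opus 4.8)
The plan is to convert the operational success-probability bound of Theorem~\ref{thm:guessingprobbound} into a diamond-norm statement by exhibiting a single input state on which $\cM_{AB}$ and $\cO_{AB}$ produce very different outputs. Recall that both $\cM_{AB}$ and $\cO_{AB}$ are regarded as CPTP maps with classically labelled (diagonal) output, so $\|\cM_{AB}-\cO_{AB}\|_\diamond$ is at least $\|(\cM_{AB}-\cO_{AB})\otimes\cI_R(\proj{\Psi})\|_1$ for any purification/extension $\ket{\Psi}$ we like to choose. The natural choice is dictated by the ensemble in Theorem~\ref{thm:guessingprobbound}: take $R$ to be a classical register holding $x$, uniformly distributed, and let the $AB$-part be the corresponding $\rho^x_{AB}=\frac{1}{d+1}\sum_a \proj{a}_A\otimes\proj{e^a_x}_B$. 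Concretely, set $\rho_{ABR}=\frac1d\sum_{x=1}^d \rho^x_{AB}\otimes\proj{x}_R$, which is a legitimate (mixed) input; by convexity of the trace norm over the decomposition into pure states it suffices to work with this classically-correlated state rather than a genuine purification.

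First I would compute $(\cO_{AB}\otimes\cI_R)(\rho_{ABR})$. Since $\cO_{AB}$ exactly outputs $x$ on $\rho^x_{AB}$ (this is exactly the defining property of the POVM~\eqref{eq:difficultpovm} noted just before the corollary), this state is $\frac1d\sum_x \proj{x}_{\text{out}}\otimes\proj{x}_R$ — a perfectly correlated pair of classical registers. Next I would bound the overlap of $(\cM_{AB}\otimes\cI_R)(\rho_{ABR})$ with this perfectly-correlated state: the overlap is precisely $\frac1d\sum_x \Pr[\cM_{AB}\text{ outputs }x\text{ on }\rho^x_{AB}]$, which is the success probability $p_{succ}$ of guessing $x$ using the instantaneous measurement $\cM_{AB}$. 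By Theorem~\ref{thm:guessingprobbound}, $p_{succ}\le 2\dim B'/\sqrt d\le 2\varepsilon$ under the hypothesis $\dim B'\le\varepsilon\sqrt d$. Then a standard lower bound on trace distance in terms of fidelity/overlap with a pure-on-one-side reference state — concretely, for the projector $P=\frac1d\sum_x\proj{x}_{\text{out}}\otimes\proj{x}_R$ onto the support (rank-$d$, but block-diagonal so this is fine) one has $\frac12\|\,\sigma-\tau\,\|_1\ge \tr(P\tau)-\tr(P\sigma)$ — gives $\frac12\|(\cM_{AB}-\cO_{AB})\otimes\cI_R(\rho_{ABR})\|_1\ge 1-p_{succ}\ge 1-2\varepsilon$, hence $\|\cM_{AB}-\cO_{AB}\|_\diamond\ge 2(1-2\varepsilon)$.

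The only mild subtlety — and the step I would be most careful about — is the last inequality, because the reference output state $\frac1d\sum_x\proj{x}_{\text{out}}\otimes\proj{x}_R$ is not pure, so the clean ``pure-versus-mixed'' bound $\frac12\|\rho-\proj\Psi\|_1\le\sqrt{1-\langle\Psi|\rho|\Psi\rangle}$ used in Corollary~\ref{corollary:diamon} does not apply directly. However one does not need that refined bound: it is enough to use the elementary fact that for any state $\sigma$ and any $0\le P\le\id$, $\frac12\|\sigma-\tau\|_1\ge\tr(P\sigma)-\tr(P\tau)$, applied with $P$ the projector onto the ``equal'' subspace $\bigoplus_x \mathbb{C}\ket{x}_{\text{out}}\otimes\ket{x}_R$, $\sigma=(\cO_{AB}\otimes\cI_R)(\rho_{ABR})$ (for which $\tr(P\sigma)=1$) and $\tau=(\cM_{AB}\otimes\cI_R)(\rho_{ABR})$ (for which $\tr(P\tau)=p_{succ}$). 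This immediately yields the claimed bound with no loss. Everything else is bookkeeping: checking that the reduction-to-a-mixed-input step is justified by linearity of the diamond norm's defining supremum under convex combinations of pure inputs, and that the ensemble weights match those in Theorem~\ref{thm:guessingprobbound}.
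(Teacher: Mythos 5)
Your proposal is correct and follows essentially the same route as the paper: the paper also lower-bounds the diamond norm by evaluating both channels on the classically-correlated input $\rho_{XAB}=\frac{1}{d}\sum_x\proj{x}\otimes\rho^x_{AB}$ and identifies the resulting trace distance with $2(1-p_{\text{succ}})$ before invoking Theorem~\ref{thm:guessingprobbound}. The only cosmetic difference is that the paper states this as an exact equality (both output states are diagonal in the same classical basis, so the trace distance can be computed directly), whereas you derive the needed inequality via the test $\tr(P\sigma)-\tr(P\tau)\le\frac{1}{2}\|\sigma-\tau\|_1$; both are fine.
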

\begin{proof}
Define 
\begin{equation*}
\rho_{XAB}=\frac{1}{d}\sum_{x}\proj{x}\otimes \rho_{AB}^x\ .
\end{equation*}
Then due to the definition of the diamond norm we have
\begin{eqnarray*}
\|\cM_{AB}-\cO_{AB}\|_\diamond & \geq \| \cI_X\otimes \cM_{AB} (\rho_{XAB})  - \cI_X\otimes \cO_{AB} (\rho_{XAB})  \|_1\\
& = 2(1- p_{\text{succ}})\\
& \geq 2(1-2\varepsilon)\ ,
\end{eqnarray*}
where $p_{\text{succ}}$ denotes the probability that $\cM_{AB}$ successfully finds $x$, and in the last line we use Theorem~\ref{thm:guessingprobbound}.
\end{proof}

\section{Implications for position-based quantum cryptography}
Position-based cryptography revolves around the idea of using
 the geographical location of an entity as its only credential. One of the most basic position-based primitives is position-verification: Here a prover~$P$  tries to convince 
a set of verifiers~$\{V_i\}_{i=1}^S$ that he is at a specific location~$\vec{r}_0\in\mathbb{R}^D$ in space ($D=3$ is of most interest in practice, of course). Proposed protocols for this problem usually assume that the verifiers are located throughout space at different positions~$\vec{r}_{V_i}$ (such that~$\vec{r}_0$ is e.g., in the convex hull of $\{\vec{r}_{V_i}\}_i$)  and have synchronized clocks. They proceed by applying distance bounding techniques~\cite{bc94}: the verifiers send challenges to~$P$ and obtain bounds on their distance to~$P$ by measuring the time taken for his responses to arrive. All computational processes are assumed to be essentially instantaneous, that is, fast in comparison to the ratio between the desired spatial resolution and the speed of light.

A number of classical protocols for position-verification have been proposed in the past~(see e.g.,~\cite{buhrmanetal10} for a list of references), but it was shown in~\cite{chandranetal09} that security is unachievable by any classical protocol without additional assumptions if there are colluding adversaries (at different locations).
 Position-based {\em quantum} cryptography attempts to overcome this impossibility  using  quantum cryptographic techniques. Its history is somewhat reminiscent of the study of bit commitment: this primitive was also known to be classically unachievable and the use of quantum cryptographic techniques appeared promising for a while. Then, Mayers~\cite{mayers97}, and Lo and Chau~\cite{lochau97} showed that secure bit commitment is generally unachievable in a quantum setting. We refer to~\cite{kentetal10} for a nice introduction to the basic ideas of position-based quantum cryptography with examples of simple attacks on specific schemes. Further attacks on previously proposed schemes were found by Lau and Lo in~\cite{laulo10}, and a general impossibility proof was given by Buhrman et al.~\cite{buhrmanetal10} based on Vaidman's techniques for instantaneous measurements. 

The impossibility of secure position-based quantum cryptography has motivated the search for schemes secure under additional assumptions. Since known attacks rely on entanglement shared between colluding adversaries, it is natural to limit the amount of entanglement available to adversaries. In~\cite{buhrmanetal10}, a position-based scheme was shown to be secure under the assumption that the adversaries share no entanglement. In~\cite{laulo10}, a security proof for a  scheme was given that relies on the assumption that the adversaries share entanglement in the form of a two- or three-level system.

Our results on instantaneous quantum computation tighten the impossibility result of~\cite{buhrmanetal10} by reducing the amount of entanglement required for a successful attack to an exponential (instead of doubly exponential) amount. On the positive side, we provide a single-round protocol  which is secure if the adversaries share less entanglement than a given linear bound and are restricted to classical communication. We also relate the security of a  protocol in the setting of no prior shared entanglement to its security  in the case where the adversaries have a limited amount of entanglement (and are otherwise unrestricted). This implies for example that the multi-round protocol of~\cite{buhrmanetal10} remains secure even if the adversaries share less than a linear amount of entanglement.

To illustrate the main points, we  focus on the problem of position-verification in $D=1$~dimension. We assume that the (honest) prover is at a location~$r_0$ between the two verifiers~$V_1$ and~$V_2$, i.e., $r_{V_1}<r_0<r_{V_2}$ and wants to convince them of this fact.    We consider protocols which satisfy the following correctness condition: if $P$ is located at~$r_0$, he can always make the verifiers accept.   A colluding set of adversaries~$\{\tilde{P}_i\}_{i=1}^M$  may try to convince the verifiers that at least one of them is at location~$r_0$, while in fact, none of them actually is, $|r_{\tilde{P}_i}-r_0|> \Delta$ (where $\Delta$ is the desired spatial resolution). The figure of merit is the soundness of such  a protocol. Following~\cite{buhrmanetal10}, we say that a protocol for position-verification is $\varepsilon$-sound if any colluding set of adversaries~$\{\tilde{P}_i\}_{i=1}^M$ cannot make the verifiers accept with probability more than~$\varepsilon$.

For concreteness, we discuss the following protocol $\pi_n$ for position-verification. It depends on a parameter~$n$ and involves, as in Section~\ref{sec:lowerbound}, a set of unitaries~$\{U_a\}_{a=1}^{d}$  taking the computational basis of~$\mathbb{C}^d$ to $d$~mutually unbiased bases, where~$d=2^n$.

\begin{protocol}{$\pi_n(r_0)$}{position-verification proving that $P$ is  at $r_0\in\mathbb{R}$}{

$V_0$ and $V_1$ are at positions $r_{V_0}<r_0<r_{V_1}$ and share common (secret) randomness in the form of uniformly distributed bit strings $a,x\in \sbin^n$. }\label{proto:positionbased}
 
\item[1.] $V_0$ sends~$a$ to~$P$ and $V_1$ prepares the state $U_a\ket{x}$ and sends it to~$P$. The timing is chosen such that both the classical information and the quantum state arrive at~$r_0$ at the same time.
\item[2.] $P$ measures the state in the basis $\{\ket{e^a_i}\}_i$, getting measurement outcome $\hat{x}\in\sbin^n$. He sends $\hat{x}$ to both $V_0$ and~$V_1$.
\item[3.] $V_0$ and $V_1$ accept if they receive $\hat{x}$ at times consistent with~$\hat{x}$ being emitted from~$r_0$ in both directions simultaneously (instantaneously after~$a$ and~$U_a\ket{x}$ reach~$r_0$), and $\hat{x}=x$. 
\end{protocol}

It is easy to check that this protocol is correct in the sense defined above. This and similar protocols have been studied in a series of papers~\cite{kent10,kentetal10,laulo10,malaney10,buhrmanetal10}. The motivation for considering such protocols originated from the no-cloning principle. It seems to suggest that to successfully convince the verifiers, some cheating adversaries have to clone the (classical) information~$x$ without knowledge of the basis this information is encoded in. It was realized early on, however, that this intuition is misleading because entanglement between colluding adversaries can render such schemes insecure. That is, consider two colluding adversaries $\tilde{P}_1,\tilde{P}_2$  located on either side of~$r_0$, with  $r_{V_1}<r_{\tilde{P}_1}<r_0<r_{\tilde{P}_2}<r_{V_2}$. We assume that  their distance $|r_{\tilde{P}_i}-r_0|>\Delta$ from $r_0$ is larger than the desired spatial resolution~$\Delta$. We further assume that they share entanglement. Specializing the results of~\cite{buhrmanetal10} to the protocol~$\pi_{n}(r_0)$ and using our version of instantaneous computation then gives
\begin{lemma}\label{lem:insecurity}
Protocol~$\pi_{n}(r_0)$ is not $\varepsilon$-sound against $\tilde{P}_0$ and $\tilde{P}_1$  if they share~$n\left(1+\frac{2^{8n+5}}{(1-\varepsilon)^2}\right)$ ebits of entanglement.
\end{lemma}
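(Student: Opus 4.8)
\emph{Plan of proof.}
The plan is to reduce any successful attack by $\tilde P_0$ and $\tilde P_1$ to an instantaneous implementation of the particular bipartite POVM $\cO_{AB}$ of~\eqref{eq:difficultpovm} (with $d=2^n$), and then invoke the efficient protocol $\pi_N(\cO_{AB})$ of Theorem~\ref{thm:maintheorem}(i). The first observation is that, in the two-adversary picture, the honest prover's action \emph{is} this POVM: the prover holds the classical string $a$ and the state $U_a\ket{x}=\ket{e^a_x}$ and measures the latter in the basis $\{\ket{e^a_i}\}_i$, obtaining $x$ deterministically. Encoding $a$ into a register $A$ and letting $B$ denote the register carrying $U_a\ket{x}$, this coincides with applying $\cO_{AB}=\{O_x=(\id_A\otimes\proj{x}_B)U_{AB}\}$: on $\ket{a}_A\otimes(U_a\ket{x})_B$ one has $U_{AB}(\ket{a}_A\otimes U_a\ket{x}_B)=\ket{a}_A\otimes\ket{x}_B$, so outcome $x$ occurs with certainty.

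Second, I would invoke the standard causality reduction of~\cite{buhrmanetal10}, specialized to $\pi_n(r_0)$: an attack by $\tilde P_0,\tilde P_1$ with $r_{V_0}<r_{\tilde P_0}<r_0<r_{\tilde P_1}<r_{V_1}$ and $|r_{\tilde P_i}-r_0|>\Delta$ is possible exactly when $\tilde P_0$ (who receives $a$) and $\tilde P_1$ (who receives $U_a\ket{x}$) can apply local operations to their inputs and a pre-shared entangled state, exchange a single \emph{simultaneous} round of classical messages, and then each locally output a value $\hat x$ to be forwarded to its verifier. Two points make this go through. (a) Since $\tilde P_0$ and $\tilde P_1$ act on disjoint registers, all their operations commute, so the joint final state is well-defined regardless of the order in which one imagines them performed; this is what lets us treat the port-based-teleportation protocol $\pi_N(\cO_{AB})$ as an instantaneous one. (b) In one spatial dimension the message each adversary sends toward the other arrives precisely at the instant that adversary must answer its verifier, so one simultaneous exchange suffices; moreover, inspecting $\pi_N(\cO)$ one sees that Alice's message $i$ and Bob's message $\{(j,\gamma_j)\}_j$ each depend only on that party's own registers, so no second round is needed, and upon receiving the other's message each of $\tilde P_0,\tilde P_1$ can compute Charlie's output $\hat x=\gamma_i$. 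Thus $\tilde P_0$ plays Alice, $\tilde P_1$ plays Bob, and they run $\pi_N(\cO_{AB})$; this consumes $n$ ebits for step~1(a) plus $N$ copies of a $2n$-qubit maximally entangled state, i.e.\ $n\bigl(1+2^{8n+5}/(1-\varepsilon)^2\bigr)$ ebits once we set $N:=2^{8n+4}/(1-\varepsilon)^2$.

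Third, I would turn the channel-approximation guarantee into a lower bound on the cheating probability. With accuracy parameter $1-\varepsilon$, Theorem~\ref{thm:maintheorem}(i) yields a POVM $\cM$, realized by $\pi_N(\cO_{AB})$, with $\|\cM-\cO_{AB}\|_\diamond\le 1-\varepsilon$. On the protocol's input ensemble (the verifiers keeping reference registers carrying $a$ and $x$) the ideal $\cO_{AB}$ outputs $\hat x=x$ with probability $1$; since the probability of a fixed event changes by at most $\tfrac12\|\cdot\|_1$ between two states and the diamond norm upper-bounds the trace-norm difference of the outputs on any (reference-extended) input, the probability that $\cM$ produces $\hat x=x$ is at least $1-\tfrac12(1-\varepsilon)=\tfrac{1+\varepsilon}{2}$. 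This exceeds $\varepsilon$ for $\varepsilon<1$ (the claim being vacuous for $\varepsilon=1$), so $\pi_n(r_0)$ is not $\varepsilon$-sound against such $\tilde P_0,\tilde P_1$.

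The step I expect to require the most care is the causality reduction~(b): checking that the spacetime positions of the two adversaries relative to $r_0$ and the verifiers genuinely permit all local operations together with a single simultaneous classical exchange before each response deadline, and that commutativity of the two parties' operations legitimizes reasoning about the joint state as if it were produced interactively. This is exactly what is established in~\cite{buhrmanetal10}, so it should be cited rather than redone; the new content of the lemma is only the replacement of Vaidman's doubly exponential scheme by the protocol $\pi_N(\cO_{AB})$ of Theorem~\ref{thm:maintheorem}(i), together with the corresponding bookkeeping of ebits.
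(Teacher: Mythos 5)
Your proposal is correct and follows essentially the same route as the paper: reduce the attack to the instantaneous-measurement scenario of Theorem~\ref{thm:guessingprobbound} via the causality argument of~\cite{buhrmanetal10}, note that the POVM $\cO_{AB}$ of~\eqref{eq:difficultpovm} recovers $x$ with certainty, and apply Theorem~\ref{thm:maintheorem}(i) with accuracy parameter $1-\varepsilon$, which gives exactly the stated ebit count. Your explicit conversion of the diamond-norm bound into a cheating probability of at least $(1+\varepsilon)/2>\varepsilon$ is a welcome bit of extra care that the paper leaves implicit.
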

\begin{proof}
To attack the protocol,~$\tilde{P}_0$ and~$\tilde{P}_1$ are faced with the analogous challenge as Alice and Bob in the situation described in the proof of Theorem~\ref{thm:guessingprobbound} (with~$d=2^n$ instead of~$d+1$~mutually unbiased bases). The soundness parameter~$\varepsilon$ is equal to the success probability of correctly guessing~$x$ based on an instantaneous measurement of their state~$\rho^{x}_{AB}$. Since the  non-local POVM~$\cO$ defined by~\eqref{eq:difficultpovm}  does so with certainty, it suffices to approximate this POVM by an instantaneous measurement with accuracy~$1-\varepsilon$  in diamond norm. The claim therefore follows from Theorem~\ref{thm:maintheorem}~\eqref{it:instmeasurement}.
\end{proof}
In contrast, the results of~\cite{buhrmanetal10} rely on Vaidman's techniques, and therefore only show insecurity if~$\tilde{P}_0$ and~$\tilde{P}_1$ share a
doubly exponential (in~$n$) amount of entanglement. The same proof technique applies to more general protocols, including multiple parties, see~\cite{buhrmanetal10}.

On the positive side, we prove security of the protocol~$\pi_{n}(r_0)$ against adversaries limited as follows: they share fewer than $n/2$ ebits of entanglement and are restricted to classical communication.
\begin{lemma}\label{lem:pinsecurityprotocol}
For $n>1$, the protocol $\pi_{n}(r_0)$ is $2\cdot 2^{m-n/2}$-sound against adversaries $\tilde{P}_0$ and $\tilde{P}_1$ sharing at most~$m$ ebits of entanglement and communicating only classical information.
\end{lemma}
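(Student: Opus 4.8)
The plan is to argue that the relativistic timing forces a cheating coalition into exactly the one-round instantaneous-measurement model of Theorem~\ref{thm:guessingprobbound}, and then to re-run the estimate in that theorem's proof with $2^n$ rather than $2^n+1$ mutually unbiased bases.

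\emph{Step 1: reduce the attack to a single simultaneous classical round.} Put $c=1$ and $r_{V_0}<r_{\tilde P_0}<r_0<r_{\tilde P_1}<r_{V_1}$, and set $\delta_0=r_0-r_{\tilde P_0}$, $\delta_1=r_{\tilde P_1}-r_0$. I would note that $\tilde P_0$ intercepts $a$ a time $\delta_0$ before it would reach $r_0$ and must release its reply to $V_0$ a time $\delta_0$ afterwards, and symmetrically $\tilde P_1$ has a window of length $2\delta_1$ around $t_0$ for the quantum state $U_a\ket x=\ket{e^a_x}$. Since $r_{\tilde P_1}-r_{\tilde P_0}=\delta_0+\delta_1$, a message sent by $\tilde P_0$ at the earliest possible instant arrives at $\tilde P_1$ exactly at $\tilde P_1$'s output deadline, and vice versa; hence each adversary's message to the other depends only on its own input and a local operation, while its reply to the verifier may additionally depend on the one message received from the partner --- which by assumption is classical. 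The verifiers accept only if both replies equal $x$, so the soundness is at most the probability that $\tilde P_0$'s reply equals $x$, and that reply is a function of $a$, of an ($a$-dependent) local measurement outcome $\alpha$ on $\tilde P_0$'s share $A'$ of the entanglement, and of a local measurement outcome $\beta$ made by $\tilde P_1$ on $BB'$. This is precisely the setup of Theorem~\ref{thm:guessingprobbound}, with $\tilde P_0$ playing Alice together with Charlie, $\tilde P_1$ playing Bob holding the mutually-unbiased-basis state on $B$, the shared state playing the role of $\eta_{A'B'}$ with $\dim B'\le 2^m$, and $a$ folded into Alice's classical message.

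\emph{Step 2: invoke the bound with $2^n$ bases.} Theorem~\ref{thm:guessingprobbound} is stated for an ensemble over $d+1$ bases, whereas $\pi_n(r_0)$ uses the $d=2^n$ bases $\{\cB_a\}_{a\in\{0,1\}^n}$ with $x\in\{0,1\}^n$ uniform. I would trace this through the theorem's proof and observe that the only affected estimate is the bound on $\mu$: the sum over $a$ now has $2^n=d$ terms, giving $\mu\le\frac{1}{\sqrt d}\sum_{a,\alpha}\tr\tau_{B'}^{a,\alpha}=\sqrt d$ in place of $(d+1)/\sqrt d$, while $\nu\le 1+1/\sqrt d$ is unchanged. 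Re-assembling the argument verbatim yields
\begin{equation*}
p_{succ}\ \le\ \frac{(d+\sqrt d+1)\,\dim B'}{d\sqrt d}\ ,\qquad d=2^n\ .
\end{equation*}
For $d\ge 4$, i.e.\ $n>1$, one has $d+\sqrt d+1\le 2d$, so $p_{succ}\le 2\dim B'/\sqrt d\le 2\cdot 2^{m}/2^{n/2}=2\cdot 2^{m-n/2}$, the claimed soundness.

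\emph{Main obstacle.} Given Theorem~\ref{thm:guessingprobbound} the algebra is immediate; the two things that need care are the relativistic book-keeping of Step~1 --- making precise that colluding $\tilde P_0,\tilde P_1$ really are limited to one simultaneous round of classical communication, so that the instantaneous-measurement bound applies at all --- and checking that the hypothesis $n>1$ is exactly what absorbs the loss from using $2^n$ instead of $2^n+1$ bases into the constant $2$. I would also be careful to interpret ``at most $m$ ebits'' so that it guarantees $\dim B'\le 2^m$, since it is $\dim B'$ that enters Theorem~\ref{thm:guessingprobbound}.
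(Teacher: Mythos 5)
Your proposal is correct and follows essentially the same route as the paper: reduce the cheating coalition to the one-simultaneous-round instantaneous-measurement model of Theorem~\ref{thm:guessingprobbound} and re-run its estimate with $d=2^n$ bases, noting that only the bound on $\mu$ changes and that $n>1$ (i.e.\ $d\ge 4$) is what keeps the constant at $2$. The paper states this adaptation in one line; your write-up simply makes the timing reduction and the modified algebra explicit, and both details check out.
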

\begin{proof}
As in the proof of Lemma~\ref{lem:insecurity}, we can upper bound the
soundness of the protocol~$\pi_{n}(r_0)$ by  Alice and Bob's success probability 
in the setting of Theorem~\ref{thm:guessingprobbound}.
Adapting the proof of the latter shows that 
the upper bound~\eqref{eq:predictionprob} holds even if $d=2^n$ instead of $d+1$~mutually unbiased bases are used. The claim follows immediately.
\end{proof}
The protocol  $\pi_{n}(r_0)$ may not be very practical if $n$ is large because it requires even honest parties to manipulate $n$~qubits at a time. It has, however, the appealing feature that it consists of only one round of communication between the verifiers and the prover and nevertheless achieves exponential security. In the setting where quantum communication is allowed, we may ask whether and how this degree of security could be achieved with a similar one-round protocol. Answering this may prove challenging because of the locking effect~\cite{DiVincenzo04}.

To obtain a more practical  protocol with exponential security, we can consider multi-round protocols: sequential composition decreases the  soundness error of a protocol. For example, it is shown in~\cite{buhrmanetal10} that
the protocol $\pi:=\pi_{1}(r_0)$ is $\varepsilon$-sound 
(even when allowing quantum communication) with 
\begin{equation}
\varepsilon=\varepsilon(\pi_{1}(r_0),\emptyset)<1-h^{-1}(1/2)<1\ ,\label{eq:epsilonsinglebound}
\end{equation}
if there is no shared entanglement between $\tilde{P}_0$ and $\tilde{P}_1$. Here we write~$\emptyset$ to signify absence of entanglement and~$h(p)=-p\log p-(1-p)\log(1-p)$ is the binary entropy function. (We point out that similar  
results were found in~\cite{laulo10} for a constant amount of prior entanglement, that is, qubits or qutrits.)
Equation~\eqref{eq:epsilonsinglebound} implies (see~\cite[Corollary 2]{buhrmanetal10}) that the $L$-fold sequential composition $\pi^{\circ L}$ has exponentially small soundness error
\begin{equation}
\varepsilon(\pi^{\circ L},\emptyset)\leq \varepsilon(\pi,\emptyset)^L< (1-h^{-1}(1/2))^L\ \label{eq:sequentialcompositionL}
\end{equation}
if the adversaries share no entanglement and are restricted to classical communication between rounds. The restriction to classical communication is necessary as the adversaries could otherwise distribute and use an arbitrary amount of entanglement in the course of attacking the composed protocol. 

We now show that  sequential composition of a protocol can provide exponential security even in a setting where the adversaries have a linear  amount of entanglement. This is based on the following relation between the no-entanglement setting and the case of a limited  amount of entanglement.
\begin{lemma}\label{lem:reductionpos}
Consider a protocol~$\pi$ achieving position-verification with soundness error~$\varepsilon(\pi,\emptyset)$ in the case where the adversaries share no entanglement. In a setting where the adversaries  share an (arbitrary) entangled state~$\eta_{A'B'}$, its soundness error $\varepsilon(\pi,\eta_{A'B'})$ is upper bounded as follows:
\begin{equation*}
\varepsilon(\pi,\eta_{A'B'})\leq \dim A'\dim B'\cdot \varepsilon(\pi,\emptyset)\ .
\end{equation*}
\end{lemma}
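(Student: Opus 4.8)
The idea is that, once a cheating strategy of the colluding adversaries is fixed, the verifiers' acceptance probability is an affine-linear functional of the quantum state the adversaries initially share, and that any density operator is dominated by $\dim A'\dim B'$ times the maximally mixed state -- the latter being an \emph{un}entangled (indeed product) state. This is in the same spirit as the dimension bound~\eqref{eq:dimensionboundentangled} used in the proof of Theorem~\ref{thm:guessingprobbound}.

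Concretely, I would first fix an arbitrary attack $\cS$ on $\pi$ that is admissible when the adversaries are allowed to pre-share entanglement: it consists of local quantum operations for $\tilde{P}_0$ and $\tilde{P}_1$ together with whatever rounds of communication $\pi$ permits, and it designates a joint register $A'B'$ that is initialized in $\eta_{A'B'}$. Since every operation of the adversaries and the final verification test is a (possibly classically randomized) completely positive map, the probability that the verifiers accept, viewed as a function of the initial state $\omega$ of $A'B'$ with all other ingredients of $\cS$ held fixed, has the form
\[
P_{\mathrm{acc}}^{\cS}(\omega)=\tr(\Pi_{\cS}\,\omega)
\]
for some operator $\Pi_{\cS}$ on $A'B'$, and since $P_{\mathrm{acc}}^{\cS}(\omega)\in[0,1]$ for every state $\omega$ we must have $0\le\Pi_{\cS}\le\id_{A'B'}$. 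I expect that establishing this linearity with full rigour -- especially in the sequentially composed setting, where classical messages are interleaved with the adversaries' operations over several rounds -- will be the most delicate point, although it amounts to nothing more than linearity of quantum channels and of measurement statistics in the input state.

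Given this, I would invoke the elementary bound $\eta_{A'B'}\le\|\eta_{A'B'}\|_\infty\,\id_{A'B'}\le\id_{A'B'}=\dim A'\dim B'\cdot\tau_{A'B'}$, where $\tau_{A'B'}:=\id_{A'B'}/(\dim A'\dim B')$ is the maximally mixed state. Since $\Pi_{\cS}\ge 0$, this gives
\[
P_{\mathrm{acc}}^{\cS}(\eta_{A'B'})\le \dim A'\dim B'\cdot\tr(\Pi_{\cS}\,\tau_{A'B'})=\dim A'\dim B'\cdot P_{\mathrm{acc}}^{\cS}(\tau_{A'B'})\ .
\]
Now $\tau_{A'B'}=\tau_{A'}\otimes\tau_{B'}$ is a product state that $\tilde{P}_0$ and $\tilde{P}_1$ can each prepare on their own, so replacing $\eta_{A'B'}$ by $\tau_{A'B'}$ in $\cS$ produces an admissible attack that uses no pre-shared entanglement; hence $P_{\mathrm{acc}}^{\cS}(\tau_{A'B'})\le\varepsilon(\pi,\emptyset)$. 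Taking the supremum over all admissible $\cS$ then yields $\varepsilon(\pi,\eta_{A'B'})\le\dim A'\dim B'\cdot\varepsilon(\pi,\emptyset)$, which is the claim.
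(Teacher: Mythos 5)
Your proposal is correct and follows essentially the same route as the paper: both arguments fix an optimal attack, observe that the acceptance probability is a positive linear functional of the pre-shared state, and then dominate $\eta_{A'B'}$ by $\dim A'\dim B'$ times the maximally mixed state, which is preparable without entanglement (the paper phrases this as shared randomness, you as a locally preparable product state). Your Heisenberg-picture packaging via the effective operator $\Pi_{\cS}$ with $0\le\Pi_{\cS}\le\id_{A'B'}$ is just a reformulation of the paper's inequality $\tr(E\cF^*(\rho_{ABST}\otimes\eta_{A'B'}))\le\tr(E\cF^*(\rho_{ABST}\otimes\id_{A'B'}))$, so the two proofs coincide in substance.
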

\begin{proof}
Without loss of generality, we may assume that the protocol~$\pi$ takes the following form. In a first step,  a quantum state~$\rho_{ABST}$  is distributed, with the adversaries  $\tilde{P}_0$, $\tilde{P}_1$ holding~$A$ and $B$, respectively, and the verifiers~$V_0$ and~$V_1$ holding~$S$ and~$T$. This step is followed by various (arbitrary) actions of the adversaries alternating with actions of the verifiers as prescribed by the protocol. We can describe this by a CPTP map~$\cF$ acting on the initial state and the entanglement shared by the adversaries. Finally, a binary valued measurement $\{E,\id-E\}$ (where $0\leq E\leq \id$) is applied to the resulting state. Its outcome determines whether or not the verifiers accept.
  The soundness error of a protocol~$\pi$ when $\tilde{P}_0$ and $\tilde{P}_1$ have a shared entangled state $\eta_{A'B'}$ can then be expressed as
\begin{equation}
\varepsilon(\pi,\eta_{A'B'})=\max_{\cF}\tr (E\cF(\rho_{ABST}\otimes\eta_{A'B'}))\ .\label{eq:fupperbound}
\end{equation}
Here  the maximum is over all CPTP maps compatible with the protocol, i.e., resulting from combining an arbitrary cheating strategy of the adversaries with the fixed actions of the verifiers.

We use the same kind of proof strategy as before  (cf.~\eqref{eq:dimensionboundentangled}), replacing shared entanglement by the completely mixed state. Let $\cF^*$ be the optimal POVM achieving the maximum on the rhs.~of~\eqref{eq:fupperbound}. Then
\begin{eqnarray*}
\varepsilon(\pi,\eta_{A'B'})&=\tr (E\cF^*(\rho_{ABST}\otimes\eta_{A'B'}))\\
&\leq \tr (E\cF^*(\rho_{ABST}\otimes\id_{A'B'}))\\
&=\dim A'\dim B'\tr (E\cF^*(\rho_{ABST}\otimes\theta_{A'B'}))
\end{eqnarray*}
where $\theta_{A'B'}$ is the completely mixed state on $A'B'$. But this expression is equivalent to simply having shared randomness between the adversaries. This implies that
\begin{equation*}
\tr (E\cF^*(\rho_{ABST}\otimes\theta_{A'B'}))\leq \varepsilon(\pi,\emptyset)\ 
\end{equation*}
and the claim follows.
\end{proof}
Combining~\eqref{eq:sequentialcompositionL} with Lemma~\ref{lem:reductionpos}, we obtain the following statement.
\begin{lemma}\label{lem:pionecomposition}
Consider two adversaries sharing an arbitrary entangled state $\eta_{A'B'}$ on~$(\mathbb{C}^2)^{\otimes k}\otimes (\mathbb{C}^2)^{\otimes k}$ and restricted to classical communication.  The $L$-fold sequential composition $\pi_1(r_0)^{\circ L}$ of the protocol $\pi_1(r_0)$ is $\varepsilon$-sound if
\begin{equation*}
L\geq \frac{2k+\log 1/\varepsilon}{\log 1/\delta}\qquad\textrm{ where }\qquad \delta=1-h^{-1}(1/2)\ .
\end{equation*}
\end{lemma}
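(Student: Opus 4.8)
The plan is to treat the $L$-fold composition $\pi_1(r_0)^{\circ L}$ as a single position-verification protocol, apply Lemma~\ref{lem:reductionpos} to it, and then substitute the sequential-composition bound~\eqref{eq:sequentialcompositionL}. First I would observe that a sequential composition of protocols of the form assumed in the proof of Lemma~\ref{lem:reductionpos} (an initial distribution of a quantum state, followed by alternating actions of the adversaries and verifiers, terminated by a binary accept/reject measurement) is again of this form: the combined CPTP map~$\cF$ now encompasses all $L$~rounds, and the terminal measurement operator projects onto the event that every round accepts. The crucial point is that, since the adversaries are restricted to classical communication between rounds, they cannot generate fresh entanglement, so the only quantum resource shared across all $L$~rounds is the fixed state $\eta_{A'B'}$ on $(\mathbb{C}^2)^{\otimes k}\otimes(\mathbb{C}^2)^{\otimes k}$; in particular $\dim A'=\dim B'=2^k$ independently of~$L$.

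Given this, Lemma~\ref{lem:reductionpos} applied to $\pi:=\pi_1(r_0)^{\circ L}$ yields
\begin{equation*}
\varepsilon(\pi_1(r_0)^{\circ L},\eta_{A'B'})\leq \dim A'\dim B'\cdot\varepsilon(\pi_1(r_0)^{\circ L},\emptyset)=2^{2k}\,\varepsilon(\pi_1(r_0)^{\circ L},\emptyset)\ .
\end{equation*}
For the no-entanglement factor I would invoke~\eqref{eq:sequentialcompositionL}, which gives $\varepsilon(\pi_1(r_0)^{\circ L},\emptyset)\leq\delta^L$ with $\delta=1-h^{-1}(1/2)$ — again using the restriction to classical inter-round communication, which is exactly the hypothesis under which~\eqref{eq:sequentialcompositionL} was stated. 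Combining the two bounds gives $\varepsilon(\pi_1(r_0)^{\circ L},\eta_{A'B'})\leq 2^{2k}\delta^L$. Requiring the right-hand side to be at most~$\varepsilon$ and taking base-two logarithms turns this into $2k-L\log(1/\delta)\leq-\log(1/\varepsilon)$, i.e.\ $L\geq(2k+\log 1/\varepsilon)/\log(1/\delta)$, which is the claimed bound.

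The one thing to be careful about — a conceptual subtlety rather than a computational obstacle — is precisely this independence of the entanglement dimension from~$L$: one must ensure that the reduction argument of Lemma~\ref{lem:reductionpos} (replacing $\eta_{A'B'}$ by the maximally mixed state $\theta_{A'B'}$, i.e.\ by shared randomness) is applied once to the whole composed protocol rather than once per round, and that the notation ``$\emptyset$'' in~\eqref{eq:sequentialcompositionL} correctly encompasses ``no pre-shared entanglement but free shared randomness and classical communication between rounds,'' so that shared randomness alone does not improve the adversaries' cheating probability. Once these points are in place, the remainder is routine bookkeeping with logarithms.
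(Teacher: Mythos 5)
Your proof is correct and is essentially the paper's argument: the paper derives this lemma precisely by applying Lemma~\ref{lem:reductionpos} to the composed protocol $\pi_1(r_0)^{\circ L}$ and combining the resulting factor $\dim A'\dim B'=2^{2k}$ with the bound~\eqref{eq:sequentialcompositionL}. The subtleties you flag (that the composed protocol still fits the form assumed in Lemma~\ref{lem:reductionpos}, that the entanglement dimension does not grow with $L$ because only classical communication is allowed between rounds, and that shared randomness is subsumed in the ``$\emptyset$'' baseline) are exactly the right points to check and are consistent with the paper's treatment.
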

Lemma~\ref{lem:pionecomposition} and  Lemma~\ref{lem:pinsecurityprotocol} give roughly the same scaling for the maximal amount of tolerable prior entanglement as a function of the soundness parameter. However, the protocol  of Lemma~\ref{lem:pionecomposition} is arguably more practical.

\myacknowledgments

This work was done while the authors were at the Institute for Quantum Information, Caltech. SB acknowledges support by NSF under Grant No.~PHY-0803371 and by NSA/ARO under Grant No.~W911NF-09-1-0442. RK acknowledges support by SNF under grant no.~PA00P2-126220.

\appendix
\section{Derivation of port-based teleportation\label{app:portbased}}
In this appendix, we give a proof of Theorem~\ref{thm:portbasedmain} by Ishizaka and Hiroshima~\cite{ishietal08}. We mostly follow~\cite{ishietal08} and first show the equivalence of port-based teleportation to the problem of distinguishing a certain set of states in Appendix~\ref{sec:qserelation}.  In Appendix~\ref{sec:qsesolved}, we present a general lower bound on the success probability of the so-called pretty good measurement~\cite{hausladenwootters94}. Combining these two facts in Appendix~\ref{sec:portbasedproof} we obtain Theorem~\ref{thm:portbasedmain}.

\subsection{From port-based teleportation to distinguishing quantum states\label{sec:qserelation}}
The crucial observation made by Ishizaka and Hiroshima is the fact that the fidelity of port-based teleportation achieved by  a certain POVM~$\{E^i_{AA'^N}\}_{i=1}^N$ is directly related to a quantum hypothesis testing problem: It is a simple function of the average success probability when using the POVM to distinguish a certain ensemble of states~$\{\eta^i_{AA'^N}\}_{i=1}^N$ with equal prior probabilities. The following lemma expresses this fact.

\begin{lemma}[Equivalence of port-based teleportation with hypothesis testing~\cite{ishietal08}]\label{lem:portbasedequivalence}
Let $\ket{\Phi}=\frac{1}{\sqrt{d}}\sum_{i=1}^d \ket{i}\ket{i}$ be the maximally entangled state.
Let $AA'^NBB'^N\cong \mathbb{C}^d\otimes (\mathbb{C}^d)^N\otimes \mathbb{C}^d\otimes (\mathbb{C}^d)^N$ and define the states
\begin{equation}
\eta^i = \tr_{B'^N\backslash B'_i} \proj{\Phi}^{\otimes N}_{A'^NB'^N}\label{eq:familyofstatestodistinguish}
\end{equation}
on $A'^N\otimes B_i\cong A'^N\otimes B\cong A'^N\otimes A$ for  $i=1,\ldots, N$
using the canonical isomorphism between $A$ and $B$. Consider a POVM
 $\{E_i:=E^i_{AA'^N}\}_{i=1}^N$ on $AA'^N\cong \mathbb{C}^d\otimes (\mathbb{C}^d)^N$ and let 
\begin{equation*}
p_{\text{succ}}=\frac{1}{N}\sum_{i=1}^N \tr(E^i\eta^i)\ 
\end{equation*}
be the average probability of successfully distinguishing the states $\{\eta^i\}_{i=1}^N$
 with uniform prior distribution using this POVM. Let $\cE=\cE_{\ket{\Phi}^{\otimes N}}:\cB(A)\rightarrow\cB(B)$ be the port-based teleportation map~\eqref{eq:multioutputchannel} associated with the POVM. The entanglement fidelity of this CPTP map satisfies
\begin{equation*}
F(\cE)=\frac{N}{d^2}\, p_{\text{succ}}\ .
\end{equation*}
\end{lemma}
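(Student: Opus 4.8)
The plan is to compute both sides of the claimed identity directly from their definitions and check they agree. The right-hand side involves $p_{\text{succ}}=\frac{1}{N}\sum_i \tr(E^i\eta^i)$, so the whole task reduces to re-expressing the entanglement fidelity $F(\cE)=\tr\proj{\Phi}_{BC}(\cE\otimes\cI_C)(\proj{\Phi}_{AC})$ as a sum of overlaps $\tr(E^i\eta^i)$ times a universal constant.

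First I would unfold the definition of the port-based teleportation map~\eqref{eq:multioutputchannel} applied to the half of a maximally entangled state $\proj{\Phi}_{AC}$. Feeding $\rho_A\to\proj{\Phi}_{AC}$ into the formula, the channel output on $BC$ is $\sum_i \tr_{B'^N\backslash B'_i}\tr_{AA'^N}\big((\id_{B'^N C}\otimes E^i_{AA'^N})(\proj{\Phi}_{AC}\otimes\proj{\Phi}^{\otimes N}_{A'^N B'^N})\big)$. Then I would sandwich this with $\proj{\Phi}_{BC}$ and take the trace. The key manipulation is the standard "transpose trick" / ricochet identity for the maximally entangled state: contracting $\proj{\Phi}_{AC}$ against an operator on $A$ on one side and $\proj{\Phi}_{BC}$ on the other effectively glues the $C$ systems together and, up to the normalization factor $1/d$ coming from each $\ket{\Phi}=\frac{1}{\sqrt d}\sum_i\ket{ii}$, identifies register $A$ with register $B_i'$. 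Concretely, $\tr\big((X_A\otimes\id_C)\proj{\Phi}_{AC}\big)$-type contractions turn the two maximally entangled states into a single factor of $\frac{1}{d}$ together with a partial transpose, and the partial transpose can be absorbed because $\eta^i$ defined in~\eqref{eq:familyofstatestodistinguish} is itself built from $\proj{\Phi}^{\otimes N}$; one checks that after the dust settles each term becomes $\frac{1}{d^2}\tr(E^i\eta^i)$, where one factor of $\frac1d$ comes from $\ket\Phi_{AC}$ and one from $\ket\Phi_{BC}$, and where the isomorphism $A'^N\otimes B_i\cong A'^N\otimes A$ is exactly the one used to make $E^i$ and $\eta^i$ act on the same space.

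Summing over $i$ then gives $F(\cE)=\frac{1}{d^2}\sum_{i=1}^N\tr(E^i\eta^i)=\frac{N}{d^2}\cdot\frac{1}{N}\sum_{i=1}^N\tr(E^i\eta^i)=\frac{N}{d^2}p_{\text{succ}}$, which is the claim. To keep the bookkeeping clean I would work in a fixed product basis throughout, write $\ket{\Phi}_{AC}=\frac{1}{\sqrt d}\sum_j\ket{j}_A\ket{j}_C$ explicitly, and push all the $C$-register contractions through first, before touching the $A'^N,B'^N$ registers; this makes the emergence of $\eta^i$ on the combined system transparent.

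The main obstacle will be handling the index/isomorphism bookkeeping carefully: the operators $E^i$ live on $AA'^N$, the states $\eta^i$ live on $A'^N B'_i$, and the statement identifies these via the canonical $A\cong B\cong B_i'$ isomorphism, so one must be scrupulous about which transpose (and on which subsystem) is introduced by each ricochet of $\ket{\Phi}$ and verify the two transposes compose to the identity on the relevant space (or that the residual transpose is harmless because it can be moved onto $E^i$ inside the trace without changing the value, using $\tr(XY)=\tr(X^{\top}Y^{\top})$). Once the index conventions are pinned down, the computation is a short chain of partial-trace and maximally-entangled-state identities with no analytic content.
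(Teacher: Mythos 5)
Your proposal is correct and follows essentially the same route as the paper: unfold the definition of $\cE_{\ket{\Phi}^{\otimes N}}$ on half of $\proj{\Phi}_{AC}$, use the ricochet identity $(X\otimes\id)\proj{\Phi}=(\id\otimes X^T)\proj{\Phi}$ twice so that the two transposes cancel and $E^i_{AA'^N}$ is re-identified as an operator on $BA'^N$, and extract the factor $1/d^2$ from the partial traces of the two maximally entangled projectors, leaving $\frac{1}{d^2}\sum_i\tr(E^i\eta^i)$. The bookkeeping concerns you flag (which subsystem each transpose lands on, and the $A\cong B\cong B'_i$ identification) are exactly the points the paper's proof handles explicitly, so no gap remains.
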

\begin{proof}
Fix an orthonormal basis for systems $A\cong B\cong C\cong A'_i\cong B'_i \cong \mathbb{C}^d$. We write $E_A=F_B$ for two operators $E_A$ and $F_B$ acting on isomorphic Hilbert spaces~$A$ and~$B$ if their matrix elements in the computational basis coincide.

The entanglement fidelity of $\cE$ is equal to 
\begin{equation*}
F(\cE) = \tr P_{BC}(\cE\otimes\cI_C)(P_{AC})\ ,
\end{equation*}
 where $P=\proj{\Phi}$ is the projection onto the maximally entangled state~$\ket{\Phi}=\frac{1}{\sqrt{d}}\sum_{i=1}^d\ket{i}\ket{i}$. 
 Omitting identity operators and tensor products for ease of notation, we have 
\begin{eqnarray*}
F(\cE) &= \sum_{i=1}^N\tr P_{BC} \tr_{B'^N\backslash B'_i}\tr_{AA'^N}E^i_{AA'^N}P_{AC}\proj{\Phi}^{\otimes N}_{A'^NB'^N}\\
&=\sum_{i=1}^N\tr P_{BC}  \tr_{AA'^N} E^i_{AA'^N}P_{AC}\eta^{i}_{A'^NB} \\
& = \sum_{i=1}^N\tr P_{BC}   E^i_{AA'^N}P_{AC}\eta^{i}_{A'^NB} \ .
\end{eqnarray*}
For every operator $X$ we have
\begin{equation}
(X\otimes \id)P=(\id \otimes X^T)P\qquad\textrm{ and }\qquad P(\id\otimes X)=P(X^{T}\otimes \id)\ ,
\end{equation}
where $X^T$ is the transpose of matrix $X$ (with respect to the standard basis). 
Then
\begin{eqnarray}
P_{BC}E^i_{AA'^N}P_{AC} & =P_{BC}({E}^i_{CA'^N})^{T_C} P_{AC}\nonumber\\
& =P_{BC}\left((E^i_{BA'^N})^{T_B}\right)^{T_B} P_{AC}\nonumber\\
&=P_{BC} E^i_{BA'^N} P_{AC}\label{eq:pbcac}
\end{eqnarray}
Reinserting~\eqref{eq:pbcac} and using the fact that the partial trace of $P$ is the fully mixed state we obtain
\begin{equation*}
F(\cE) =\frac{1}{d^2} \sum_{i=1}^N \tr\left(  E^i_{A'^NB}\eta^i_{A'^NB}\right) = \frac{N}{d^{2}}\, p_{\text{succ}}\ .
\end{equation*}
\end{proof}

\subsection{A lower bound on the success probability of the pretty good measurement\label{sec:qsesolved}}
Having reduced port-based teleportation to a hypothesis testing problem, it remains to show that there is a suitable POVM solving the latter. Concretely, we need to provide a POVM~$\{E^i\}_{i=1}^N$ that distinguishes the family of states~$\{\eta^i\}_{i=1}^N$, defined by~\eqref{eq:familyofstatestodistinguish}. Ishizaka and Hiroshima~\cite{ishietal08,ishietal09} show that the pretty good measurement
\begin{equation}
E^i =\left(\sum_j\eta^j\right)^{-1/2}\eta^i \left(\sum_k\eta^k\right)^{-1/2}\qquad\textrm{ for }i=1,\ldots, N\label{eq:defpgm}
\end{equation}
does so with sufficiently high success probability
\begin{equation}
p_{\text{succ}}^{\text{pgm}} =\frac{1}{N}\sum_{i=1}^N \tr(E^i\eta^i)\ .\label{eq:pgmsuccess}
\end{equation}
Here we rederive and generalize their bound on the quantity~\eqref{eq:pgmsuccess}: we derive a general lower bound on the success probability of the pretty good measurement for \emph{any} (uniform) family of states~$\{\frac{1}{N},\eta^i\}_{i=1}^N$ (see Lemma~\ref{lem:pgmperformance} below). We will apply this bound to port-based teleportation in Appendix~\ref{sec:portbasedproof}.

Our main technical tool is the following inequality.
\begin{lemma}\label{lem:operatorbound}
Let $X$ and $Y$ be non-negative operators on two (not necessarily identical) Hilbert spaces, satisfying
\begin{equation*}
\tr X=\tr\sqrt{Y}\ .
\end{equation*}
Then 
\begin{equation}
\tr X^2 \geq \frac{(\tr Y)^3}{\mathsf{rank} X \tr Y^2}\ . \label{eq:mainbound}
\end{equation}
\end{lemma}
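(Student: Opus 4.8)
The plan is to reduce the operator inequality to an elementary inequality on eigenvalues. First I would diagonalize both operators: let $x_1,\dots,x_r$ be the \emph{nonzero} eigenvalues of $X$, so $r=\rank X$, and let $y_1,y_2,\dots$ be the eigenvalues of $Y$. Then $\tr X^2=\sum_i x_i^2$, $\tr\sqrt Y=\sum_j\sqrt{y_j}$, $\tr Y=\sum_j y_j$, $\tr Y^2=\sum_j y_j^2$, and the hypothesis becomes $\sum_i x_i=\sum_j\sqrt{y_j}$.

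The $X$-side is handled by the Cauchy--Schwarz inequality applied to the $r$ nonzero eigenvalues:
$$\tr X^2=\sum_{i=1}^r x_i^2\ \geq\ \frac1r\Big(\sum_{i=1}^r x_i\Big)^2=\frac{(\tr X)^2}{\rank X}=\frac{(\tr\sqrt Y)^2}{\rank X}\ ,$$
where the last step uses the hypothesis $\tr X=\tr\sqrt Y$. Hence it remains to prove the purely $Y$-dependent inequality $(\tr\sqrt Y)^2\,\tr Y^2\geq(\tr Y)^3$.

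For this I would invoke H\"older's inequality. Writing $b_j:=\sqrt{y_j}\geq0$, the remaining claim reads $\big(\sum_j b_j^2\big)^3\leq\big(\sum_j b_j\big)^2\big(\sum_j b_j^4\big)$. Factoring $b_j^2=b_j^{2/3}\cdot b_j^{4/3}$ and applying H\"older with conjugate exponents $3/2$ and $3$ gives
$$\sum_j b_j^2\ \leq\ \Big(\sum_j b_j\Big)^{2/3}\Big(\sum_j b_j^4\Big)^{1/3}\ ;$$
cubing and substituting $b_j=\sqrt{y_j}$ yields exactly $(\tr Y)^3\leq(\tr\sqrt Y)^2\,\tr Y^2$. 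Chaining this with the Cauchy--Schwarz bound produces $\tr X^2\geq(\tr\sqrt Y)^2/\rank X\geq(\tr Y)^3/(\rank X\,\tr Y^2)$, which is the assertion (under the harmless non-degeneracy assumption $Y\neq0$, so that $\tr Y^2>0$).

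There is no substantial obstacle here; the statement is an eigenvalue inequality in operator disguise. The only point needing care is the bookkeeping of zero eigenvalues: Cauchy--Schwarz must be applied to the $\rank X$ nonzero eigenvalues of $X$ only --- including the zeros would replace $\rank X$ by the ambient dimension and weaken the bound --- whereas for $Y$ the zero eigenvalues are irrelevant, since they drop out of every sum appearing above.
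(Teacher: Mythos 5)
Your proof is correct, and it takes a genuinely different route from the paper's. The paper derives the bound from two scalar inequalities with free parameters, namely $\alpha\,\tr X^2 \geq \frac{2\alpha^{1/2}}{\sqrt{\rank X}}\tr X - 1$ and $2\gamma\,\tr\sqrt{Y} \geq 3 - \frac{1}{\gamma^2(\tr Y)^3}\tr Y^2$, which are obtained by rescaling the operator inequalities $X^2 \geq 2X - \id_{\supp X}$ and $2\sqrt{Y} \geq 3Y - Y^2$ (a linearization trick borrowed from channel-coding arguments); it then combines them via the hypothesis $\tr X = \tr\sqrt{Y}$ and optimizes over the parameter. You instead factor the claim into two classical moment inequalities: Cauchy--Schwarz on the nonzero spectrum of $X$ to get $\tr X^2 \geq (\tr\sqrt{Y})^2/\rank X$, and a H\"older (Lyapunov-type) interpolation on the spectrum of $Y$ to get $(\tr Y)^3 \leq (\tr\sqrt{Y})^2\,\tr Y^2$. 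Both arguments are elementary eigenvalue manipulations and yield the identical constant; yours is arguably more transparent, since it makes visible exactly where the $\rank X$ and the $Y$-moments enter, and it avoids the parameter optimization. The paper's quadratic-lower-bound technique has the advantage of being the standard idiom in the pretty-good-measurement literature (it is the same device used to bound $p_{\text{succ}}^{\text{pgm}}$ in related coding arguments), which is presumably why the authors chose it. Your care with the zero eigenvalues of $X$ (applying Cauchy--Schwarz only to the $\rank X$ nonzero ones) and the remark that $Y\neq 0$ is needed for the right-hand side to be meaningful are both correct and appropriate.
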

\begin{proof}
For every non-negative operators $X, Y$ and $\alpha, \gamma>0$, we have
\begin{eqnarray} 
\alpha \cdot \tr X^2&\geq \frac{\alpha^{1/2}}{\sqrt{\mathsf{rank} X }}2\tr X-1\ ,\label{eq:opinequalityone}\\
\gamma\cdot 2\tr\sqrt{Y}&\geq 3-\frac{1}{\gamma^2(\tr Y)^3}\cdot\tr Y^2\ .\label{eq:opinequalitytwo}
\end{eqnarray}
These inequalities follow easily from 
\begin{eqnarray*}
X^2 &\geq 2X-\id_{\text{supp} X}\ ,\\
2\sqrt{Y}&\geq  3Y-Y^2\ , 
\end{eqnarray*}
and rescaling. (Such inequalities have previously been used e.g., in~\cite[Eq.~(32)]{hausladenetal96} for channel coding). Using~$\tr X = \tr \sqrt{Y}$ we obtain
\begin{equation*} 
\tr X^2 \geq \frac{2}{\alpha}-\frac{\mathsf{rank} X}{\alpha^2(\tr Y)^3}\tr Y^2\ ,
\end{equation*}
and the claim follows by letting $\alpha=\left( \frac{(\tr Y)^3}{\mathsf{rank} X \tr Y^2}\right)^{-1}$.
\end{proof}
The proof of the lower bound on the success probability of the pretty good measurement is now straightforward:
\begin{lemma}\label{lem:pgmperformance}
Consider an ensemble  of states $\{\frac{1}{N},\eta^i\}_{i=1}^N$. The success probability~\eqref{eq:pgmsuccess} of the pretty good measurement is bounded by
\begin{equation*}
p_{\text{\emph{succ}}}^{\text{\emph{pgm}}} \geq \frac{1}{N\bar{r}\tr\bar{\eta}^2}
\end{equation*}
where
\begin{equation*}
\bar{\eta} =\frac{1}{N}\sum_{i=1}^N \eta^i\ \textrm{ and }\  \bar{r}=\frac{1}{N}\sum_{i=1}^N \mathsf{rank}\eta^i \ .
\end{equation*}
\end{lemma}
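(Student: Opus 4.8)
The plan is to unpack the definition of the pretty good measurement and reduce the success probability to a quantity controlled by Lemma~\ref{lem:operatorbound}. First I would write out
\begin{equation*}
p_{\text{succ}}^{\text{pgm}} = \frac{1}{N}\sum_{i=1}^N \tr\!\left(\bar\eta^{-1/2}\eta^i\bar\eta^{-1/2}\eta^i\right) = \frac{1}{N}\sum_{i=1}^N \tr\!\left(\left(\bar\eta^{-1/4}\eta^i\bar\eta^{-1/4}\right)^2\right),
\end{equation*}
where I pass to the support of $\bar\eta$ (the states $\eta^i$ all live there). Setting $X_i := \bar\eta^{-1/4}\eta^i\bar\eta^{-1/4}$, this is $\frac{1}{N}\sum_i \tr X_i^2$, and each $X_i$ is a non-negative operator with $\rank X_i = \rank\eta^i$. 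So the goal becomes a lower bound on $\sum_i \tr X_i^2$.

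The key step is choosing the partner operator $Y_i$ so that the hypothesis $\tr X_i = \tr\sqrt{Y_i}$ of Lemma~\ref{lem:operatorbound} holds while $Y_i$ has a tractable form. The natural choice is $Y_i := \bar\eta^{-1/2}\eta^i\bar\eta^{-1/2} \cdot (\text{something})$, but the cleanest is to take $\sqrt{Y_i} = X_i = \bar\eta^{-1/4}\eta^i\bar\eta^{-1/4}$, i.e. $Y_i = X_i^2$; then the trace condition is automatic and $\tr Y_i = \tr X_i^2$, $\tr Y_i^2 = \tr X_i^4$. That, however, reintroduces $\tr X_i^4$, which is not obviously better. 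The more useful move is instead to apply Lemma~\ref{lem:operatorbound} with $X = X_i$ and $Y = \eta^i$ itself viewed suitably — but the trace condition $\tr X_i = \tr\sqrt{\eta^i}$ will not hold in general. So the right approach, I expect, is to not apply the lemma to each $i$ separately but to bound $\sum_i \tr X_i^2$ from below via convexity/Cauchy--Schwarz first: since $\tr X_i^2 \geq (\tr X_i)^2/\rank X_i$ and $\sum_i \tr X_i = \sum_i \tr(\bar\eta^{-1/2}\eta^i) = \tr(\bar\eta^{-1/2}\cdot N\bar\eta) = N\tr\sqrt{\bar\eta}$... — wait, this double-counts. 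Let me instead apply Lemma~\ref{lem:operatorbound} directly to the aggregated operators: set $X = \bigoplus_i X_i$ (block-diagonal) so $\tr X = \sum_i \tr(\bar\eta^{-1/2}\eta^i) = N$ (using $\sum_i\eta^i = N\bar\eta$), $\tr X^2 = N\, p_{\text{succ}}^{\text{pgm}}$, and $\rank X = \sum_i \rank\eta^i = N\bar r$. Then take $Y$ with $\tr\sqrt Y = N$, e.g. $Y = (N\bar\eta)^2$ scaled appropriately: $\sqrt Y = N\bar\eta$ gives $\tr\sqrt Y = N$, $\tr Y = N^2$, $\tr Y^2 = N^4\tr\bar\eta^2$. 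Plugging into~\eqref{eq:mainbound}: $N\, p_{\text{succ}}^{\text{pgm}} \geq (N^2)^3/(N\bar r \cdot N^4\tr\bar\eta^2) = N/( \bar r\tr\bar\eta^2)$, i.e. $p_{\text{succ}}^{\text{pgm}} \geq 1/(N\bar r\tr\bar\eta^2)$, exactly the claim.

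So the steps, in order, are: (1) rewrite $p_{\text{succ}}^{\text{pgm}}$ using $\bar\eta = \frac{1}{N}\sum_i\eta^i$ and restrict to $\supp\bar\eta$; (2) define the block-diagonal operator $X = \bigoplus_{i=1}^N \bar\eta^{-1/2}\eta^i$ (or rather its symmetrized version — note $\bar\eta^{-1/2}\eta^i$ is not self-adjoint, so one should use $X_i = \bar\eta^{-1/4}\eta^i\bar\eta^{-1/4}$, which has the same trace and same squared trace against itself under the cyclicity $\tr(\bar\eta^{-1/2}\eta^i\bar\eta^{-1/2}\eta^i) = \tr X_i^2$), and record $\tr X = N$, $\rank X = N\bar r$, $\tr X^2 = N p_{\text{succ}}^{\text{pgm}}$; (3) define $Y = N^2\bar\eta^{\oplus}$ ... more precisely choose $Y$ acting on the same space with $\sqrt Y = N\bigoplus_i(\eta^i/\tr\eta^i\text{-normalization})$ — here the subtlety is matching $\tr\sqrt Y = \tr X = N$ while keeping $\tr Y, \tr Y^2$ expressible through $\bar\eta$; the clean choice is simply $\sqrt Y = N \cdot \mathrm{diag}$-embedding of $\bar\eta$ repeated, but one must be careful that $\sum_i$ of the blocks of $\sqrt Y$ reproduces $N\bar\eta$; (4) invoke Lemma~\ref{lem:operatorbound} and simplify. \textbf{The main obstacle} is step (3): getting a valid $Y$ whose $\tr\sqrt Y$ matches $\tr X$ and whose $\tr Y$, $\tr Y^2$ collapse to $\tr\bar\eta^2$ — I expect the correct device is to not block-diagonalize at all but to apply the lemma with $X = \bar\eta^{-1/2}\bar\eta\bar\eta^{-1/2}$-type normalization on a single space, or equivalently to use the estimate $\tr X_i^2 \geq (\tr X_i)^2/\rank X_i$ only after a Cauchy--Schwarz step $\sum_i \frac{(\tr X_i)^2}{\rank X_i} \geq \frac{(\sum_i \tr X_i)^2}{\sum_i \rank X_i}$, and then handle $\sum_i\tr X_i = N\tr\sqrt{\bar\eta}$... no — the honest resolution is that Lemma~\ref{lem:operatorbound} is meant to be applied once, with $X$ the block operator and $Y = N\bar\eta$ living on a \emph{single} copy of the space (the lemma explicitly allows $X,Y$ on different Hilbert spaces), so $\tr\sqrt Y = \sqrt N \tr\sqrt{\bar\eta}$, forcing a rescaling of $X$ — and sorting out that rescaling is precisely where care is needed.
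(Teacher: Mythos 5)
Your overall strategy --- aggregate the blocks $X_i$ into one block-diagonal operator, apply Lemma~\ref{lem:operatorbound} once, and take $Y$ proportional to $\bar\eta$ --- is the paper's strategy, but the computation you actually write down does not establish the hypothesis $\tr X=\tr\sqrt{Y}$ of that lemma, and the intermediate identities you use to ``land exactly on the claim'' are false. Concretely: (i) the pretty good measurement \eqref{eq:defpgm} normalizes by $\bigl(\sum_j\eta^j\bigr)^{-1/2}=(N\bar\eta)^{-1/2}$, not $\bar\eta^{-1/2}$, so with your $X_i=\bar\eta^{-1/4}\eta^i\bar\eta^{-1/4}$ one has $\tr X^2=\sum_i\tr\bigl(\bar\eta^{-1/2}\eta^i\bar\eta^{-1/2}\eta^i\bigr)=N^2p^{\mathrm{pgm}}_{\mathrm{succ}}$ rather than $Np^{\mathrm{pgm}}_{\mathrm{succ}}$, and $\tr X=\sum_i\tr(\bar\eta^{-1/2}\eta^i)=N\tr\sqrt{\bar\eta}$ rather than $N$; (ii) with $\sqrt{Y}=N\bar\eta$ one has $\tr\sqrt{Y}=N\neq N\tr\sqrt{\bar\eta}=\tr X$, so the lemma does not apply to this pair at all, and moreover $\tr Y=N^2\tr\bar\eta^2$ and $\tr Y^2=N^4\tr\bar\eta^4$, not $N^2$ and $N^4\tr\bar\eta^2$. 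Your final line reproduces the statement of the lemma only because several of these errors cancel; you yourself flag the trace-matching step as unresolved at the end, and that unresolved step is the entire content of the proof.

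The resolution requires no rescaling: take $Y=\sum_i\eta^i=N\bar\eta$ and $X=\bigoplus_i Y^{-1/4}\eta^iY^{-1/4}$, which is exactly the normalization dictated by \eqref{eq:defpgm}. Then $\tr X^2=\sum_i\tr\bigl(Y^{-1/2}\eta^iY^{-1/2}\eta^i\bigr)=\sum_i\tr(E^i\eta^i)=Np^{\mathrm{pgm}}_{\mathrm{succ}}$, and the trace condition holds identically,
\begin{equation*}
\tr X=\tr\Bigl(Y^{-1/2}\sum_i\eta^i\Bigr)=\tr\sqrt{Y}\ ,
\end{equation*}
while $\tr Y=N$, $\tr Y^2=N^2\tr\bar\eta^2$, and $\rank X=\sum_i\rank\eta^i=N\bar r$ (since $\supp\eta^i\subseteq\supp Y$). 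Inequality \eqref{eq:mainbound} then gives $Np^{\mathrm{pgm}}_{\mathrm{succ}}\geq N^3/(N\bar r\cdot N^2\tr\bar\eta^2)$, i.e.\ the claim. This is precisely the paper's proof (phrased there via $\rho_{IQ}=\sum_i\proj{i}_I\otimes\eta^i$ and $Y=\tr_I\rho_{IQ}$).
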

\begin{proof}
Define the (unnormalized) operators
\begin{eqnarray*}
\rho_{IQ}&= \sum_{i=1}^N \proj{i}_I\otimes \eta^i\, \\
Y& = \tr_I \rho_{IQ}\ ,\\
X &=(\id_I\otimes Y)^{-1/4} \rho_{IQ}(\id_I\otimes Y)^{-1/4}\ .
\end{eqnarray*}
It is easy to check that
\begin{eqnarray*}
p_{\text{succ}}^{\text{pgm}}&=\frac{1}{N}\tr X^2\\
\tr X&= \tr(\tr_{I} X)=\tr \sqrt{Y}\ .
\end{eqnarray*}
The claim then follows from Lemma~\ref{lem:operatorbound} using
$\mathsf{rank} (Y^{-1/4}\eta^i Y^{-1/4})=\mathsf{rank}\eta^i$.
\end{proof}

\subsection{Proof of Theorem~\ref{thm:portbasedmain}\label{sec:portbasedproof}}
To prove Theorem~\ref{thm:portbasedmain},  we combine the reformulation of Lemma~\ref{lem:portbasedequivalence} with the lower bound on the success probability of the pretty good measurement (Lemma~\ref{lem:pgmperformance}). 

According to Lemma~\ref{lem:portbasedequivalence}, we need to consider the problem of distinguishing the states
\begin{equation}
\eta^i = \proj{\Phi}_{A_i'B}\otimes \left(\rho^{\otimes (N-1)}\right)_{A'^N \backslash A'_i}\ ,\label{eq:translationinvariance}
\end{equation}
where $\rho=\id/d$ is the completely mixed state. We have 
\begin{equation*}
\tr \left(\proj{\Phi}_{A'_1B}\otimes\rho_{A_2'}\right)\left(\rho_{A_1'}\otimes\proj{\Phi}_{A_2'B}\right) =1/d^3\ ,
\end{equation*}
and then 
\begin{eqnarray*}
\tr(\eta^i)^2 &=1/d^{N-1}\\
\tr\eta^i\eta^j &=\textfrac{1}{d^3}\cdot\textfrac{1}{d^{N-2}}\ ,\qquad\textrm{ for }i\neq j\ .
\end{eqnarray*}
As a result,
\begin{equation*}
\tr\bar{\eta}^2 =\frac{1}{Nd^{N-1}}+\frac{N-1}{Nd^{N+1}}\end{equation*}
for  the ensemble average~$\bar{\eta}=\frac{1}{N}\sum_i \eta^i$.
Furthermore, we have $\bar{r}=\mathsf{rank} \eta^1=d^{N-1}$.
Using Lemma~\ref{lem:pgmperformance}, we conclude that
\begin{eqnarray*}
p_{\text{succ}}^{\text{pgm}} &\geq  \frac{1}{Nd^{N-1}\left(\frac{1}{Nd^{N-1}}+\frac{N-1}{Nd^{N+1}}\right)}\\
&=\frac{d^2}{N}\left(\frac{1}{1+\frac{d^2-1}{N}}\right)\\
&\geq \frac{d^2}{N}\left(1-\frac{d^2-1}{N}\right)\ .
\end{eqnarray*}
In particular, according to Lemma~\ref{lem:portbasedequivalence}, this implies that there is a POVM~$\{E^i_{AA'^N}\}_{i=1}^N$ such that the associated CPTP map~$\cE_{\ket{\Phi}^{\otimes N}}$ achieves port-based teleportation with entanglement fidelity
\begin{equation*}
F(\cE_{\ket{\Phi}^{\otimes N}})=\frac{N}{d^2}\, p_{\text{succ}}^{\text{pgm}}\geq 1-\frac{d^2-1}{N}\ ,
\end{equation*}
as claimed.

 \bibliography{q}
\end{document}